	\def\ps@headings{%
	\def\@oddhead{\mbox{}\scriptsize\rightmark \hfil \thepage}%
	\def\@evenhead{\scriptsize\thepage \hfil \leftmark\mbox{}}%
	\def\@oddfoot{}%
	\def\@evenfoot{}}
\newtheorem{theorem}{Theorem}
\newtheorem{lemma}{Lemma}
\newtheorem{corollary}{Corollary}
\newtheorem{fact}{Fact}
\begin{document}
\title{Optimal Scheduling for Fair Resource Allocation in Ad Hoc Networks with Elastic and Inelastic Traffic}
\author{
	\IEEEauthorblockN{Juan Jos\'e Jaramillo}
	\IEEEauthorblockA{Coordinated Science Laboratory and\\
	Department of Electrical and Computer Engineering\\
	University of Illinois at Urbana-Champaign\\
	Email: jjjarami@illinois.edu}
	\and
	\IEEEauthorblockN{R. Srikant}
	\IEEEauthorblockA{Coordinated Science Laboratory and\\
	Department of Electrical and Computer Engineering\\
	University of Illinois at Urbana-Champaign\\
	Email: rsrikant@illinois.edu}
	\thanks{Research supported by NSF Grants 07-21286, 05-19691, 03-25673, ARO MURI Subcontracts, AFOSR Grant FA-9550-08-1-0432 and DTRA Grant HDTRA1-08-1-0016.}
}
\maketitle

%
%
\begin{abstract}

This paper studies the problem of congestion control and scheduling in ad hoc wireless networks that have to support a mixture of best-effort and real-time traffic. Optimization and stochastic network theory have been successful in designing architectures for fair resource allocation to meet long-term throughput demands. However, to the best of our knowledge, strict packet delay deadlines were not considered in this framework previously. In this paper, we propose a model for incorporating the quality of service (QoS) requirements of packets with deadlines  in the optimization framework. The solution to the problem results in a joint congestion control and scheduling algorithm which fairly allocates resources to meet the fairness objectives of both elastic and inelastic flows, and per-packet delay requirements of inelastic flows.

\end{abstract}
%
%
\section{Introduction}

As wireless networks become more prevalent, they will be expected to support a wide variety of services, including best-effort and real-time traffic. Such networks will have to serve flows that require quality of service requirements, such as minimum bandwidth and maximum delay constraints, while at the same time keeping the network queues stable for data traffic and guaranteeing throughput optimality. For the case of wireless networks with best-effort traffic only, optimization-based algorithms  which naturally map into different layers of the protocol stack have been proposed in the last few years \cite{Eryilmaz05, Lin04, Neely05, Stolyar05, Eryilmaz06, Chen06}; see \cite{Lin06} for a survey. However, these models do not take into account strict per-packet delay bounds.

Scheduling packets with strict deadlines has been studied in \cite{Shakkottai02, Raghunathan08, Dua07, Liu06}, but all of these papers provide approximate solutions. The model that we study in this paper builds upon the recent work in \cite{Hou09a, Hou09b, Hou09c} on admission control and scheduling for inelastic flows in collocated wireless networks, i.e., networks where all links interfere with each other. Among the many contributions in these papers is a key modeling innovation whereby  the network is studied in frames, where a frame is a contiguous set of time-slots of fixed duration. Packets with deadlines are assumed to arrive at the beginning of a frame and have to be served by the end of the frame. In this paper, we explore this modeling paradigm further to study the design of resource allocation algorithms for ad hoc networks. The frame-based model allows us to incorporate delay deadlines in the optimization framework for very general network models, and somewhat surprisingly, allows us to design a common framework for handling both elastic and inelastic flows.

The main contributions of the paper are as follows:

\begin{enumerate}
	\item We present an optimization framework for resource allocation in a wireless network consisting of both best-effort flows and flows that generate traffic with per-packet delay constraints. The framework allows for very general interference, channel and arrival models.
	\item Using a dual decomposition approach, we derive an optimal scheduling and congestion control algorithm that fairly allocates resources and ensures that a required fraction of each inelastic flow's packets are delivered on time by appealing to connections between Lagrange multipliers, queues, and service deficits. The scheduling algorithm seamlessly integrates inelastic and elastic traffic into a unified max-weight scheduling framework, extending the well-known results in \cite{Tassiulas92}.
	\item The convergence of the above algorithm in an appropriate stochastic sense is proved and it is also shown that the network is stable.
\end{enumerate}

%
%
\section{Network Model}

The network is represented by a directed graph $\mathcal{G}=( \mathcal{N},\mathcal{L} )$, where $\mathcal{N}$ is the set of nodes and $\mathcal{L}$ is the set of directional links such that for all $n_1,n_2 \in \mathcal{N}$ if $(n_1,n_2) \in \mathcal{L}$ then node $n_1$ can transmit to node $n_2$. The links are numbered $1$ through $|\mathcal{L}|$, and by abusing notation, we sometimes use $l \in \cal L$ to mean $l\in\{1,2,\ldots, |\mathcal{L}|\}$.

Traffic is assumed to be a mixture of elastic and inelastic flows, where an inelastic flow is one that has maximum per-packet delay requirements. In contrast, elastic flows do not have such requirements.

Time is divided in slots, where a set of $T$ consecutive time slots makes a \emph{frame}. We assume that packet arrivals only occur at the beginning of a frame, and every inelastic packet has a deadline of $T$ time slots. If a packet misses its deadline it is discarded, and it is required that the loss probability at link $l \in \mathcal{L}$ due to deadline expiry must be no more than $p_l$. For elastic traffic we associate a utility function $U_l(x_l)$ which is a function of the mean elastic arrival rate per frame $x_l$. We assume that $U_l(.)$ is a concave function.

For a given frame, we denote by the vector $a_i=( a_{il} )_{l \in \mathcal{L}}$ the number of inelastic packet arrivals at every link, where $a_{il}$ is a random variable with mean $\lambda_l$ and variance $\sigma^2_{il}$. We further assume that arrivals are independent between different frames and that $Pr(a_{il}=0)>0$ and $Pr(a_{il}=1)>0$. The last two assumptions are used to guarantee that the Markov chain we define later is both irreducible and aperiodic, although these can be replaced by other similar assumptions. Similarly, we define $a_e=( a_{el} )_{l \in \mathcal{L}}$ to be the number of elastic packet arrivals at every link in a given frame.

The channel state is assumed to be constant in a given frame, independent between different frames, and independent of arrivals. The vector $c=( c_l )_{l \in \mathcal{L}}$ denotes the number of packets link $l$ can successfully transmit on a time slot in a given frame.

Depending on the wireless technology used, we can have some channel feedback before or after a transmission occurs. If channel estimation is performed before transmitting, we can determine the optimal rate at which we can successfully transmit.  Alternatively, feedback from the receiver after the transmission can be used to detect if a transmission is successful or not. In this paper we try to capture both scenarios in the following cases:

\begin{enumerate}
	\item Known channel state: It is assumed that $c_l$ is a \emph{non-negative} random variable with mean $\bar{c}_l$ and variance $\sigma_{cl}^2$, and we get to know the channel state at the \emph{beginning} of the frame.
	\item Unknown channel state: It is assumed that $c_l$ is a \emph{Bernoulli} random variable with mean $\bar{c}_l$ and we only get to know the channel state at the \emph{end} of the frame.
\end{enumerate}

In the known channel state case where we do channel estimation to determine the optimal transmission rate, we can potentially send more than one packet in a time slot at higher rates. This is captured by the fact that we make no assumptions on the values $c_l$ can take since it will be determined by the particular wireless technology used. In the case of unknown channel state we assume that we only get the binary feedback of acknowledgments, which is reflected in the Bernoulli assumption on $c_l$. In this case, and without any loss of generality, we assume only one packet can be transmitted per time slot per link.

In the rest of the paper we will consider the known channel case first and then in Section \ref{unknown_channel_case} we will highlight the differences in the analysis for the unknown channel case.

%
%
\section{Problem Formulation}
\label{formulation_kc}

We first formulate the problem as a static optimization problem. Using decomposition theory, we will then obtain a dynamic solution to this problem and prove its stability using stochastic Lyapunov techniques.

A feasible schedule $s=( s_{il,t}, s_{el,t} )$ is such that $s_{il,t}$, $s_{el,t}$ respectively denote the number of inelastic and elastic packets that can be scheduled for transmission at link $l \in \mathcal{L}$ and time slot $t \in \{ 1, 2, \ldots, T \}$; thus, $s_{il,t} + s_{el,t} > 0$ means that link $l$ is scheduled to transmit in time slot $t$ of the frame. Furthermore, for any $t$, if $s_{il_1,t} + s_{el_1,t} > 0$ and $s_{il_2,t} + s_{el_2,t} > 0$ then links $l_1$ and $l_2$ can be scheduled to simultaneously transmit without interfering with each other. Assuming the inelastic arrivals and the channel state are given by $a_i$ and $c$ respectively, we have the following constraints:
\begin{equation}
\label{constraint_inelastic_kc}
 \sum_{t=1}^{T} s_{il,t} \leq a_{il} \mbox{ for all } l \in \mathcal{L} \mbox{ and }
\end{equation}
\begin{equation}
\label{constraint_slot_kc}
 s_{il,t} + s_{el,t} \leq c_l \mbox{ for all } l \in \mathcal{L} \mbox{ and } t \in \{ 1, 2, \ldots, T \}.
\end{equation}
We denote by $\mathcal{S}(a_i, c)$ the set of all feasible schedules when the arrival state is $a_i$ and the channel state is $c$; thus, $\mathcal{S}(a_i, c)$ captures any interference constraints we have on our network and satisfies (\ref{constraint_inelastic_kc}) and (\ref{constraint_slot_kc}).

At the beginning of any frame we must choose a feasible schedule to serve all links and decide how many elastic packets are allowed to be injected in the network. Therefore, our goal is to find a function $Pr( s | a_i, c)$ which is the probability of using schedule $s \in \mathcal{S}(a_i, c)$ when the inelastic arrivals are given by $a_i$ and the channel state is $c$, subject to the constraint that the loss probability at link $l \in \mathcal{L}$ due to deadline expiry cannot exceed $p_l$. For elastic traffic, we want to select the vector $a_e$ such that we maximize the network utility while keeping the queues stable.

To properly formulate the problem, let us first define $\mu_{i}(a_i,c)$ to be the expected number of inelastic packets served if the number of packet arrivals is given by $a_i$ and the channel state is $c$. Similarly, $\mu_{e}(a_i,c)$ denotes the expected number of elastic packets that can be served. Therefore, we have the following constraints:
\begin{equation*}
 \mu_{il}(a_i,c) \leq \sum\limits_{s \in \mathcal{S}(a_i, c)} \sum_{t=1}^{T} s_{il,t} Pr(s | a_i, c)
\end{equation*}
\begin{equation*}
 \mu_{el}(a_i,c) \leq \sum\limits_{s \in \mathcal{S}(a_i, c)} \sum_{t=1}^{T} s_{el,t} Pr(s | a_i, c).
\end{equation*}

The expected service for mixed traffic at link $l$ is then given by
\begin{equation*}
	\mu_{il} \stackrel{def}{=} \sum\limits_{a_i} \sum\limits_{c} \mu_{il}(a_i,c) Pr(c) Pr(a_i)
\end{equation*}
\begin{equation*}
	\mu_{el} \stackrel{def}{=} \sum\limits_{a_i} \sum\limits_{c} \mu_{el}(a_i,c) Pr(c) Pr(a_i)
\end{equation*}
and due to QoS requirements and capacity constraints, we require that
\begin{equation*}
	\mu_{il} \geq \lambda_l(1-p_l) \mbox{ and } x_l \leq \mu_{el}.
\end{equation*}

We will focus on maximizing the following objective for some given vector $w \in \mathbb{R}_+^{|\mathcal{L}|}$:
\begin{equation}
\label{offline_opt_kc}
	\mathop{ \max\limits_{ \mu_{i}(a_i,c), \mu_{e}(a_i,c), } }\limits_{ \mu_{i}, \mu_{e}, x, Pr(s | a_i, c) } \sum\limits_{l \in \mathcal{L}} U_l(x_l) + w_l \mu_{il}
\end{equation}
subject to
\begin{equation*}
 \mu_{il}(a_i,c) \leq \sum\limits_{s} \sum_{t=1}^{T} s_{il,t} Pr(s | a_i, c) \mbox{ for all } l \in \mathcal{L}, a_i, c
\end{equation*}
\begin{equation*}
 \mu_{el}(a_i,c) \leq \sum\limits_{s} \sum_{t=1}^{T} s_{el,t} Pr(s | a_i, c) \mbox{ for all } l \in \mathcal{L}, a_i, c
\end{equation*}
\begin{equation*}
	\mu_{il} = \sum\limits_{a_i} \sum\limits_{c} \mu_{il}(a_i,c) Pr(c) Pr(a_i) \mbox{ for all } l \in \mathcal{L}
\end{equation*}
\begin{equation*}
	\mu_{el} = \sum\limits_{a_i} \sum\limits_{c} \mu_{el}(a_i,c) Pr(c) Pr(a_i) \mbox{ for all } l \in \mathcal{L}
\end{equation*}
\begin{equation*}
	\mu_{il} \geq \lambda_l(1-p_l) \mbox{ for all } l \in \mathcal{L}
\end{equation*}
\begin{equation*}
	0 \leq x_l \leq \mu_{el} \mbox{ for all } l \in \mathcal{L}
\end{equation*}
\begin{equation*}
	Pr(s | a_i, c) \geq 0 \mbox{ for all } s \in \mathcal{S}(a_i, c), a_i, c
\end{equation*}
\begin{equation*}
	\sum_{s} Pr(s | a_i, c) \leq 1 \mbox{ for all } a_i, c.
\end{equation*}

The vector $w$ can be used to allocate additional bandwidth fairly to inelastic flows beyond what is required to meet their QoS needs. Other uses for $w$ will be explored in the simulations section. We will assume that the arrivals and loss probability requirements are feasible and thus the optimization problem has a solution $(x^*, \mu_i^*)$.

%
%
\section{Solution Using Dual Decomposition}
\label{decomposition_kc}

Using the definition of the dual function\cite{Luenberger03}, we have that $D(\delta_i, \delta_e)=$
\begin{equation*}
	\mathop{ \max\limits_{ \mu_{i}(a_i,c), \mu_{e}(a_i,c), } }\limits_{ \mu_{i}, \mu_{e}, x, Pr(s | a_i, c) } \sum\limits_{l \in \mathcal{L}} \left\{ \begin{array}{l} U_l(x_l) + w_l \mu_{il} - \delta_{el} [x_l - \mu_{el}] \\ - \delta_{il} [\lambda_l(1-p_l) - \mu_{il}] \\ \end{array} \right\}
\end{equation*}
subject to
\begin{equation*}
 \mu_{il}(a_i,c) \leq \sum\limits_{s} \sum_{t=1}^{T} s_{il,t} Pr(s | a_i, c) \mbox{ for all } l \in \mathcal{L}, a_i, c
\end{equation*}
\begin{equation*}
 \mu_{el}(a_i,c) \leq \sum\limits_{s} \sum_{t=1}^{T} s_{el,t} Pr(s | a_i, c) \mbox{ for all } l \in \mathcal{L}, a_i, c
\end{equation*}
\begin{equation*}
	\mu_{il} = \sum\limits_{a_i} \sum\limits_{c} \mu_{il}(a_i,c) Pr(c) Pr(a_i) \mbox{ for all } l \in \mathcal{L}
\end{equation*}
\begin{equation*}
	\mu_{el} = \sum\limits_{a_i} \sum\limits_{c} \mu_{el}(a_i,c) Pr(c) Pr(a_i) \mbox{ for all } l \in \mathcal{L}
\end{equation*}
\begin{equation*}
	x_l \geq 0 \mbox{ for all } l \in \mathcal{L}
\end{equation*}
\begin{equation*}
	Pr(s | a_i, c) \geq 0 \mbox{ for all } s \in \mathcal{S}(a_i, c), a_i, c
\end{equation*}
\begin{equation*}
	\sum_{s} Pr(s | a_i, c) \leq 1 \mbox{ for all } a_i, c.
\end{equation*}

Slater's condition \cite{Boyd04} states that, since the objective is concave and the constraints are affine functions, the duality gap is zero and therefore $D(\delta_i^*, \delta_e^*)=\sum\limits_{l \in \mathcal{L}} U_l(x_l^*) + w_l \mu_{il}^*$, where
\begin{equation*}
	(\delta_i^*, \delta_e^*) \in \mathop{\arg\min}_{\delta_{il} \geq 0, \delta_{el} \geq 0} D(\delta_i, \delta_e).
\end{equation*}

We are interested in finding $(x^*, \mu_i^*)$ but not the value $D(\delta_i^*, \delta_e^*)$, so if we rewrite the objective in the dual function as
\begin{equation*}
	\mathop{ \max\limits_{ \mu_{i}(a_i,c), \mu_{e}(a_i,c), } }\limits_{ \mu_{i}, \mu_{e}, x, Pr(s | a_i, c) } \left\{ \begin{array}{l} \sum\limits_{l \in \mathcal{L}} U_l(x_l) - \delta_{el} x_l \\ + \sum\limits_{l \in \mathcal{L}} (w_l+\delta_{il}) \mu_{il} + \delta_{el} \mu_{el} \\ - \sum\limits_{l \in \mathcal{L}} \delta_{il} \lambda_l(1-p_l) \\ \end{array} \right\}
\end{equation*}
we notice that the problem can be decomposed into the following subproblems:
\begin{equation*}
	\max\limits_{ x_l \geq 0 } U_l(x_l) - \delta_{el} x_l
\end{equation*}
and
\begin{equation}
\label{first_decomposition_kc}
	\mathop{ \max\limits_{ \mu_{i}(a_i,c), \mu_{e}(a_i,c), } }\limits_{ \mu_{i}, \mu_{e}, Pr(s | a_i, c) } \sum\limits_{l \in \mathcal{L}} (w_l+\delta_{il}) \mu_{il} + \delta_{el} \mu_{el}
\end{equation}
subject to
\begin{equation*}
 \mu_{il}(a_i,c) \leq \sum\limits_{s} \sum_{t=1}^{T} s_{il,t} Pr(s | a_i, c) \mbox{ for all } l \in \mathcal{L}, a_i, c
\end{equation*}
\begin{equation*}
 \mu_{el}(a_i,c) \leq \sum\limits_{s} \sum_{t=1}^{T} s_{el,t} Pr(s | a_i, c) \mbox{ for all } l \in \mathcal{L}, a_i, c
\end{equation*}
\begin{equation*}
	\mu_{il} = \sum\limits_{a_i} \sum\limits_{c} \mu_{il}(a_i,c) Pr(c) Pr(a_i) \mbox{ for all } l \in \mathcal{L}
\end{equation*}
\begin{equation*}
	\mu_{el} = \sum\limits_{a_i} \sum\limits_{c} \mu_{el}(a_i,c) Pr(c) Pr(a_i) \mbox{ for all } l \in \mathcal{L}
\end{equation*}
\begin{equation*}
	Pr(s | a_i, c) \geq 0 \mbox{ for all } s \in \mathcal{S}(a_i, c), a_i, c
\end{equation*}
\begin{equation*}
	\sum_{s} Pr(s | a_i, c) \leq 1 \mbox{ for all } a_i, c.
\end{equation*}
Furthermore, since we are interested in solving the problem for non-negative values of $\delta_{il}$ and $\delta_{el}$, it must be the case that $\mu_{i}^*$ and $\mu_{e}^*$ are as large as the constraints allow, and since the upper bounds for $\mu_{il}^*(a_i,c)$ and $\mu_{el}^*(a_i,c)$ are expressed as a convex combination, and the objective function in (\ref{first_decomposition_kc}) is linear, the problem can be decomposed into the following subproblems for fixed $a_i$ and $c$:
\begin{equation*}
	\max\limits_{s \in \mathcal{S}(a_i,c)} \sum\limits_{l \in \mathcal{L}} \left\{ (w_l+\delta_{il}) \sum_{t=1}^{T} s_{il,t} + \delta_{el} \sum_{t=1}^{T} s_{el,t} \right\}.
\end{equation*}

This suggests the following iterative algorithm to find the solution to our optimization problem, where $k$ is the step index and $X_{max} > \max_{l \in \mathcal{L}} x_l^*$ is a fixed parameter:
\begin{equation*}
	\tilde{x}_l^*(k) \in \mathop{\arg\max}\limits_{ 0 \leq x_l \leq X_{max} } U_l(x_l) - \delta_{el}(k) x_l
\end{equation*}
\begin{align*}
	& \tilde{s}^*(a_i,c,k) \in \\
	& \mathop{\arg\max}\limits_{s \in \mathcal{S}(a_i,c)} \sum\limits_{l \in \mathcal{L}} \left\{ [w_l+\delta_{il}(k)] \sum_{t=1}^{T} s_{il,t} + \delta_{el}(k) \sum_{t=1}^{T} s_{el,t} \right\}
\end{align*}
\begin{equation*}
	\tilde{\mu}_{il}^*(k) = \sum\limits_{a_i} \sum\limits_{c} \sum_{t=1}^{T} \tilde{s}_{il,t}^*(a_i, c, k) Pr(c) Pr(a_i)
\end{equation*}
\begin{equation*}
	\tilde{\mu}_{el}^*(k) = \sum\limits_{a_i} \sum\limits_{c} \sum_{t=1}^{T} \tilde{s}_{el,t}^*(a_i, c, k) Pr(c) Pr(a_i).
\end{equation*}
We update the Lagrange multipliers $\delta_i(k)$, $\delta_e(k)$ at every step according to the following equations:
\begin{equation*}
	\delta_{il}(k+1) = \{ \delta_{il}(k) + \epsilon [\lambda_l(1-p_l) - \tilde{\mu}_{il}^*(k)] \}^+
\end{equation*}
and
\begin{equation*}
	\delta_{el}(k+1) = \{ \delta_{el}(k) + \epsilon [\tilde{x}_l^*(k) - \tilde{\mu}_{el}^*(k)] \}^+
\end{equation*}
where $\epsilon>0$ is a fixed step-size parameter, and for any $\alpha \in \mathbb{R}$, $\alpha^+ \stackrel{def}{=}\max\{ \alpha, 0 \}$.

Making the change of variables $\epsilon \hat{d}(k) = \delta_i(k)$ and $\epsilon \hat{q}(k) = \delta_e(k)$, we have that our iterative algorithm can be rewritten as
\begin{equation*}
	\tilde{x}_l^*(k) \in \mathop{\arg\max}\limits_{ 0 \leq x_l \leq X_{max} } \frac{1}{\epsilon} U_l(x_l) - \hat{q}_l(k) x_l
\end{equation*}
\begin{align*}
	& \tilde{s}^*(a_i,c,k) \in \\
	& \mathop{\arg\max}\limits_{s \in \mathcal{S}(a_i,c)} \sum\limits_{l \in \mathcal{L}} \left\{ [\frac{1}{\epsilon} w_l+\hat{d}_l(k)] \sum_{t=1}^{T} s_{il,t}  + \hat{q}_l(k) \sum_{t=1}^{T} s_{el,t} \right\}
\end{align*}
\begin{equation*}
	\tilde{\mu}_{il}^*(k) = \sum\limits_{a_i} \sum\limits_{c} \sum_{t=1}^{T} \tilde{s}_{il,t}^*(a_i, c, k) Pr(c) Pr(a_i)
\end{equation*}
\begin{equation*}
	\tilde{\mu}_{el}^*(k) = \sum\limits_{a_i} \sum\limits_{c} \sum_{t=1}^{T} \tilde{s}_{el,t}^*(a_i, c, k) Pr(c) Pr(a_i).
\end{equation*}
with update equations
\begin{equation*}
	\hat{d}_l(k+1) = [ \hat{d}_l(k) + \lambda_l(1-p_l) - \tilde{\mu}_{il}^*(k)]^+
\end{equation*}
\begin{equation*}
	\hat{q}_l(k+1) = [ \hat{q}_l(k) + \tilde{x}_l^*(k) - \tilde{\mu}_{el}^*(k) ]^+.
\end{equation*}

It should be noted that due to the change of variables $\hat{d}_l(k)$ can be interpreted as a queue that has $\lambda_l(1-p_l)$ arrivals and $\tilde{\mu}_{il}^*(k)$ departures at step $k$; $\hat{q}_l(k)$ can have a similar queue interpretation. The dual decomposition approach only provides an intuition behind the solution, but the real network has stochastic and dynamic arrivals and channel state conditions. In the next section, we present the complete solution which takes into account these dynamics and we also establish its convergence properties.

%
%
\section{Dynamic Algorithm and Its Convergence Analysis}
\subsection{Scheduler and Congestion Controller}
\label{online_kc}

To implement the algorithm online, we propose the following congestion control algorithm in frame $k$, where the queue length at link $l$ is given by $q_l(k)$:
\begin{equation}
\label{online_opt_cc_kc}
	\tilde{x}_l^*(k) \in \mathop{\arg\max}\limits_{ 0 \leq x_l \leq X_{max} } \frac{1}{\epsilon} U_l(x_l) - q_l(k) x_l.
\end{equation}

We need to convert this elastic arrival rate, which in general is a non-negative real number, into a non-negative integer indicating the number of elastic packets allowed to enter the network in a given frame. This conversion can be made in many different ways: we assume the elastic arrivals at link $l$, $\tilde{a}_{el}(k)$, are a random variable with mean $\tilde{x}_l^*(k)$ and variance upper-bounded by $\sigma_{e}^2$, and are such that $Pr(\tilde{a}_{el}(k)=0)>0$ and $Pr(\tilde{a}_{el}(k)=1)>0$ for all $l \in \mathcal{L}$ and all $k$. The last two assumptions are used to guarantee the Markov chain we define below is both irreducible and aperiodic, although these can be replaced by other similar assumptions.

Letting the number of inelastic arrivals be denoted by $a_i(k)$ and the channel state by $c(k)$, we propose the following scheduling algorithm:
\begin{align}
	& \tilde{s}^*(a_i(k),c(k),d(k),q(k)) \in \label{online_opt_sch_kc} \\
	& \mathop{\arg\max}\limits_{s \in \mathcal{S}(a_i(k),c(k))} \sum\limits_{l \in \mathcal{L}} \left\{ [\frac{1}{\epsilon} w_l+d_l(k)] \sum_{t=1}^{T} s_{il,t} + q_l(k) \sum_{t=1}^{T} s_{el,t} \right\}. \notag
\end{align}

The vectors $d(k)$ and $q(k)$ are updated from frame to frame as follows:
\begin{equation*}
	d_l(k+1) = [ d_l(k) + \tilde{a}_{il}(k) - I_{il}^*(a_i(k),c(k),d(k),q(k))]^+
\end{equation*}
\begin{equation*}
	q_l(k+1) = [ q_l(k) + \tilde{a}_{el}(k) - I_{el}^*(a_i(k),c(k),d(k),q(k)) ]^+,
\end{equation*}
where
\begin{equation*}
	I_{il}^*(a_i(k),c(k),d(k),q(k)) = \sum_{t=1}^{T} \tilde{s}_{il,t}^*(a_i(k),c(k),d(k),q(k))
\end{equation*}
\begin{equation*}
	I_{el}^*(a_i(k),c(k),d(k),q(k)) = \sum_{t=1}^{T} \tilde{s}_{el,t}^*(a_i(k),c(k),d(k),q(k))
\end{equation*}
and $\tilde{a}_{il}(k)$ is a binomial random variable with parameters $a_{il}(k)$ and $1-p_l$. The quantity $\tilde{a}_{il}(k)$ can be generated by the network as follows: upon each inelastic packet arrival, toss a coin with probability of \emph{heads} equal to $1-p_l$, and if the outcome is \emph{heads}, add a one to the deficit counter.

In our notation we make explicit the fact that for fixed $\epsilon$ and $w$, the optimal scheduler (\ref{online_opt_sch_kc}) is a function of $a_i(k)$, $c(k)$, $d(k)$, and $q(k)$. We interpret $d_l(k)$ as a virtual queue that counts the deficit in service for link $l$ to achieve a loss probability due to deadline expiry less than or equal to $p_l$. This deficit queue was first used in the inelastic traffic context in \cite{Hou09a} for the case of collocated networks; the connection to the dual decomposition approach now provides a Lagrange multiplier interpretation to it and allows the extension to general ad hoc networks. Note that $q_l(k)$ is just the queue size for elastic packets at link $l$.

\subsection{Convergence Results}
\label{convergence_kc}

For readability, we present the main results in this section, but the proofs are deferred to the appendixes. We start by noting that $(d(k),q(k))$ defines an irreducible and aperiodic Markov chain. To prove that our dynamic algorithm achieves the optimal solution to the static problem (\ref{offline_opt_kc}) in some average sense and fulfills all links' requirements, we will first bound the expected drift of $(d(k),q(k))$ for a suitable Lyapunov function.

\begin{lemma}
\label{expected_drift_kc}
Consider the Lyapunov function $V(d,q)=\frac{1}{2}\sum_{l \in \mathcal{L}}d_l^2+q_l^2$. If $\mu^*_{il} > \lambda_l(1-p_l)$ and $\mu^*_{el} > x_l^*$ for all $l \in \mathcal{L}$, then
\begin{align*}
	E & \left[ V(d(k+1),q(k+1)) | d(k)=d, q(k)=q \right] - V(d,q) \\
 	\leq & B_1 - B_2 \sum_{l \in \mathcal{L}} d_l - B_3 \sum_{l \in \mathcal{L}} q_l - \frac{1}{\epsilon} \sum_{l \in \mathcal{L}} [ U_l(x_l^*) - U_l(\tilde{x}_l^*(k)) ] \\
 	& - \frac{1}{\epsilon} \sum_{l \in \mathcal{L}} w_l \mu^*_{il} - w_l E \left[ I_{il}^*(a_i(k),c(k),d,q) \right]
\end{align*}
for some positive constants $B_1$, $B_2$, $B_3$, any $\epsilon>0$, where $(x^*, \mu_i^*)$ is the solution to (\ref{offline_opt_kc}), $\tilde{x}^*(k)$ is the solution to (\ref{online_opt_cc_kc}), and $I_i^*(a_i(k),c(k),d,q)$ is obtained from the solution to (\ref{online_opt_sch_kc}).
$\hfill \diamond$
\end{lemma}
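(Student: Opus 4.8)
The plan is the standard stochastic-Lyapunov-drift argument, driven by the optimality of the online scheduler~(\ref{online_opt_sch_kc}) and congestion controller~(\ref{online_opt_cc_kc}) relative to the offline optimum $(x^*,\mu_i^*)$ of~(\ref{offline_opt_kc}). First I would apply the elementary inequality $([\,a+u\,]^+)^2\le (a+u)^2=a^2+2au+u^2$ to each coordinate of the recursions for $d_l(k+1)$ and $q_l(k+1)$, obtaining
\[
V(d(k+1),q(k+1))-V(d,q)\le \tfrac12\sum_{l}\big[(\tilde a_{il}(k)-I_{il}^*)^2+(\tilde a_{el}(k)-I_{el}^*)^2\big]+\sum_{l}d_l(\tilde a_{il}(k)-I_{il}^*)+\sum_{l}q_l(\tilde a_{el}(k)-I_{el}^*),
\]
and then take the conditional expectation given $d(k)=d,\ q(k)=q$ (note $\tilde x_l^*(k)$ is determined by this conditioning, whereas $a_i(k),c(k),\tilde a_i(k),\tilde a_e(k)$ remain random and independent of $(d,q)$). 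The second-moment terms are bounded by a finite constant $B_1$ uniform in $(d,q)$ and $k$: one has $0\le\tilde a_{il}(k)\le a_{il}(k)$ and $0\le I_{il}^*\le a_{il}(k)$, so that difference is squared-bounded by $a_{il}(k)^2$ with mean $\lambda_l^2+\sigma_{il}^2$; the elastic arrivals have mean $\tilde x_l^*(k)\le X_{max}$ and variance $\le\sigma_e^2$, and $0\le I_{el}^*\le T c_l(k)$, whose moments are controlled by $\bar c_l$ and $\sigma_{cl}^2$. This part is routine.

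For the linear (cross) terms, I would compute $E[\tilde a_{il}(k)]=\lambda_l(1-p_l)$ (tower rule, since $\tilde a_{il}(k)$ is binomial with parameters $a_{il}(k)$ and $1-p_l$) and $E[\tilde a_{el}(k)]=\tilde x_l^*(k)$, reducing their conditional expectation to $\sum_l d_l\big(\lambda_l(1-p_l)-E[I_{il}^*]\big)+\sum_l q_l\big(\tilde x_l^*(k)-E[I_{el}^*]\big)$. The crux is a lower bound on $\sum_l\big\{(\tfrac1\epsilon w_l+d_l)E[I_{il}^*]+q_l E[I_{el}^*]\big\}$: for each fixed realization $(a_i,c)$, the schedule chosen in~(\ref{online_opt_sch_kc}) maximizes a linear objective over $\mathcal S(a_i,c)$ with the nonnegative coefficients $\tfrac1\epsilon w_l+d_l$ and $q_l$, hence attains at least the value of the offline randomized rule $Pr^*(s\mid a_i,c)$; since $\sum_s Pr^*(s\mid a_i,c)\le 1$, the empty schedule (value $0$) is feasible, and $\mu_{il}^*(a_i,c)\le\sum_s\sum_t s_{il,t}Pr^*(s\mid a_i,c)$ (and likewise for elastic traffic), nonnegativity of the weights makes each relaxation point the right way and gives $\sum_l\{(\tfrac1\epsilon w_l+d_l)I_{il}^*(a_i,c)+q_l I_{el}^*(a_i,c)\}\ge\sum_l\{(\tfrac1\epsilon w_l+d_l)\mu_{il}^*(a_i,c)+q_l\mu_{el}^*(a_i,c)\}$; averaging over $(a_i,c)$ (independent of $(d,q)$) yields the same inequality with $I^*$ replaced by $E[I^*]$ and $\mu^*(a_i,c)$ by $\mu^*$. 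Substituting this, after adding and subtracting $\tfrac1\epsilon\sum_l w_l E[I_{il}^*]$ (needed because the cross term carries only the coefficient $d_l$, not $\tfrac1\epsilon w_l+d_l$), bounds the cross terms by $\sum_l d_l\big(\lambda_l(1-p_l)-\mu_{il}^*\big)+\sum_l q_l\big(\tilde x_l^*(k)-\mu_{el}^*\big)-\tfrac1\epsilon\sum_l w_l\big(\mu_{il}^*-E[I_{il}^*]\big)$, which already reproduces the last line of the claimed bound.

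It remains to absorb the first two sums. By the hypotheses $\mu_{il}^*>\lambda_l(1-p_l)$ and $\mu_{el}^*>x_l^*$ together with finiteness of $\mathcal L$, the constants $B_2:=\min_l\big(\mu_{il}^*-\lambda_l(1-p_l)\big)$ and $B_3:=\min_l(\mu_{el}^*-x_l^*)$ are strictly positive, so $\sum_l d_l(\lambda_l(1-p_l)-\mu_{il}^*)\le -B_2\sum_l d_l$ and $\sum_l q_l(x_l^*-\mu_{el}^*)\le -B_3\sum_l q_l$; for the leftover $\sum_l q_l(\tilde x_l^*(k)-x_l^*)$ I would use optimality in~(\ref{online_opt_cc_kc}): since $x_l^*<X_{max}$ is feasible there, $\tfrac1\epsilon U_l(\tilde x_l^*(k))-q_l\tilde x_l^*(k)\ge\tfrac1\epsilon U_l(x_l^*)-q_l x_l^*$, i.e. $q_l(\tilde x_l^*(k)-x_l^*)\le-\tfrac1\epsilon[U_l(x_l^*)-U_l(\tilde x_l^*(k))]$. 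Summing all contributions gives exactly the stated inequality. I expect the scheduler comparison of the previous paragraph to be the only genuinely delicate step: one must keep straight that the offline variables satisfy their defining constraints only as inequalities, that $Pr^*(\cdot\mid a_i,c)$ is merely sub-stochastic, and that every weight $\tfrac1\epsilon w_l+d_l$ and $q_l$ is nonnegative so that these relaxations do not flip the inequality; the rest is bookkeeping with the moment bounds and two one-line optimality arguments.
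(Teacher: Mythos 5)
Your proof is correct and follows essentially the same route as the paper's: the quadratic drift expansion with the moment bounds (the paper's Lemmas \ref{bound_d_kc} and \ref{bound_q_kc}), the identities $E[\tilde a_{il}(k)]=\lambda_l(1-p_l)$ and $E[\tilde a_{el}(k)]=\tilde x_l^*(k)$, the add-and-subtract of $\frac{1}{\epsilon}\sum_l w_l E[I_{il}^*]$, the congestion-controller optimality step, and the comparison of the max-weight schedule against the offline randomized rule --- which the paper packages as Fact \ref{fact_online_opt_kc} plus a capacity-region argument ending in the assertion that $(\mu_{il}^*,\mu_{el}^*)_{l\in\mathcal{L}}\in\mathcal{C}$, whereas you spell out the underlying sub-stochasticity and nonnegative-weight details directly. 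No gaps.
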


It is important to note that since the last two terms in the right-hand side of the inequality can be upper-bounded, Lemma \ref{expected_drift_kc} implies that $(d(k),q(k))$ is positive recurrent since the expected drift is negative but for a finite set of values of $(d(k),q(k))$. As a direct consequence of this fact, we note that the total service deficit and queue length have a $O(1 / \epsilon)$ bound.

\begin{corollary}
\label{def_queue_bound_kc}
If $\mu^*_{il} > \lambda_l(1-p_l)$ and $\mu^*_{el} > x_l^*$ for all $l \in \mathcal{L}$, then the total expected service deficit and network queue length is upper-bounded by
\begin{equation*}
	\limsup_{k \rightarrow \infty} E\left[ \sum_{l \in \mathcal{L}} d_l(k)+q_l(k) \right] \leq B_4 + \frac{1}{\epsilon} B_5
\end{equation*}
for all $l \in \mathcal{L}$ and
\begin{equation*}
	B_4 = \frac{B_1}{ \min \{ B_2, B_3 \} }
\end{equation*}
and
\begin{equation*}
	B_5 \leq \frac{ \sum_{l \in \mathcal{L}} \max_{ 0 \leq x_l \leq X_{max} } 2 |U_l(x_l)| + w_l \lambda_l }{ \min \{ B_2, B_3 \} }.
\end{equation*}
$\hfill \diamond$
\end{corollary}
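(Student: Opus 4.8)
The plan is to derive Corollary~\ref{def_queue_bound_kc} as a direct consequence of the drift bound in Lemma~\ref{expected_drift_kc} by a standard Foster--Lyapunov argument, after first absorbing the two ``utility/inelastic-service'' terms on the right-hand side into a single constant. The first step is to observe that the congestion controller (\ref{online_opt_cc_kc}) confines $\tilde{x}_l^*(k)$ to $[0,X_{max}]$, so $U_l(x_l^*) - U_l(\tilde{x}_l^*(k)) \geq -2\max_{0 \le x_l \le X_{max}}|U_l(x_l)|$, and similarly $w_l\mu_{il}^* - w_l E[I_{il}^*(\cdot)] \geq -w_l\lambda_l$ since $\mu_{il}^* \le \lambda_l$ (it is an expected number of inelastic packets served per frame, which cannot exceed the mean arrival rate $\lambda_l$) and $I_{il}^* \ge 0$. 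Hence the last two lines of the Lemma's inequality are bounded below by $-\tfrac{1}{\epsilon}\sum_l(2\max_{x_l}|U_l(x_l)| + w_l\lambda_l)$, and we get the cleaner bound
\begin{align*}
	E\!\left[ V(d(k+1),q(k+1)) \mid d(k)=d, q(k)=q \right] - V(d,q)
	\leq B_1 + \tfrac{1}{\epsilon}\tilde{B}_5 - B_2\!\sum_{l}d_l - B_3\!\sum_{l}q_l,
\end{align*}
where $\tilde{B}_5 = \sum_{l\in\mathcal{L}} 2\max_{0\le x_l\le X_{max}}|U_l(x_l)| + w_l\lambda_l$.

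The second step is the actual recurrence/averaging argument. Writing $\min\{B_2,B_3\}\sum_l(d_l+q_l) \le B_2\sum_l d_l + B_3\sum_l q_l$, the displayed inequality yields
\begin{align*}
	E\!\left[ V(d(k+1),q(k+1)) \mid d(k),q(k) \right] - V(d(k),q(k))
	\leq B_1 + \tfrac{1}{\epsilon}\tilde{B}_5 - \min\{B_2,B_3\}\!\sum_{l}\bigl(d_l(k)+q_l(k)\bigr).
\end{align*}
Taking total expectation (which is justified because, as noted after the Lemma, $(d(k),q(k))$ is positive recurrent, so $E[V(d(k),q(k))]<\infty$ for all $k$), then summing over $k=0,\dots,K-1$, telescoping, dividing by $K$, and using $V\ge 0$ gives
\begin{equation*}
	\frac{1}{K}\sum_{k=0}^{K-1} E\!\left[ \sum_{l\in\mathcal{L}}\bigl(d_l(k)+q_l(k)\bigr) \right]
	\leq \frac{B_1 + \tfrac{1}{\epsilon}\tilde{B}_5}{\min\{B_2,B_3\}} + \frac{E[V(d(0),q(0))]}{K\min\{B_2,B_3\}}.
\end{equation*}
Letting $K\to\infty$ kills the last term; to convert the Ces\`aro average into a genuine $\limsup$ one invokes positive recurrence once more (the chain has a stationary distribution $\pi$ under which $E_\pi[\sum_l d_l+q_l]$ equals the Ces\`aro limit, and $E[\,\cdot\,\mid d(k),q(k)]$ converges to $E_\pi[\,\cdot\,]$), yielding $\limsup_{k\to\infty} E[\sum_l d_l(k)+q_l(k)] \le B_4 + \tfrac{1}{\epsilon}B_5$ with $B_4 = B_1/\min\{B_2,B_3\}$ and $B_5 = \tilde{B}_5/\min\{B_2,B_3\}$, matching the stated constants.

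I expect the only real subtlety to be the justification of the last passage from the time-average bound to the $\limsup$ of $E[\sum_l d_l(k)+q_l(k)]$: one needs that the expected one-step drift being eventually negative (outside a finite set) not only gives positive recurrence but also finite stationary first moments of $d_l+q_l$, so that the stationary expectation is finite and equals the Ces\`aro limit. This follows from the Foster--Lyapunov moment bound (e.g., the argument of Meyn--Tweedie, or simply the fact that the drift inequality with the linear negative term is itself a moment bound), but it is worth flagging because $V$ is quadratic while the bounded quantity is linear, so one is really extracting a first-moment bound from a second-moment Lyapunov function. Everything else — the lower bounds on the utility gap and on $w_l\mu_{il}^* - w_l E[I_{il}^*]$, and the telescoping — is routine.
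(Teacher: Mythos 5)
Your proof is essentially the argument the paper intends: the paper gives no explicit proof of Corollary~\ref{def_queue_bound_kc}, asserting only that it is a ``direct consequence'' of Lemma~\ref{expected_drift_kc}, and your Foster--Lyapunov telescoping of the drift bound, with the two $\tfrac{1}{\epsilon}$-terms absorbed into a constant, reproduces exactly the stated constants $B_4=B_1/\min\{B_2,B_3\}$ and $B_5=\bigl(\sum_l 2\max_{0\le x_l\le X_{max}}|U_l(x_l)|+w_l\lambda_l\bigr)/\min\{B_2,B_3\}$. You are also right to flag the passage from the Ces\`aro average to the $\limsup$ of $E[\sum_l d_l(k)+q_l(k)]$ as the one step needing care; the drift inequality itself (quadratic $V$, linear negative term) is the standard route to $f$-norm ergodicity there.

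One justification is stated backwards, though the inequality you need is true: to get $w_l\mu^*_{il}-w_lE[I^*_{il}]\ge -w_l\lambda_l$ you should cite $\mu^*_{il}\ge 0$ together with $E[I^*_{il}]\le\lambda_l$ (the latter because constraint (\ref{constraint_inelastic_kc}) forces $I^*_{il}=\sum_t \tilde{s}^*_{il,t}\le a_{il}(k)$, whose mean is $\lambda_l$). The facts you actually invoke --- $\mu^*_{il}\le\lambda_l$ and $I^*_{il}\ge 0$ --- prove the reverse inequality $w_l\mu^*_{il}-w_lE[I^*_{il}]\le w_l\lambda_l$, which is not the direction needed to upper-bound the drift. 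This is a one-line fix and does not affect the rest of the argument.
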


This also implies that the scheduling and congestion control algorithm fulfills all links' inelastic requirements.
\begin{corollary}
If $\mu^*_{il} > \lambda_l(1-p_l)$ and $\mu^*_{el} > x_l^*$ for all $l \in \mathcal{L}$, then the online algorithm fulfills all the inelastic constraints. That is:
\begin{equation*}
	\liminf_{K \rightarrow \infty} E \left[ \frac{1}{K} \sum_{k=1}^{K} I_{il}^*(a_i(k),c(k),d(k),q(k)) \right] \geq \lambda_l(1-p_l)
\end{equation*}
for all $l \in \mathcal{L}$.
$\hfill \diamond$
\end{corollary}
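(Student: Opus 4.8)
The plan is to exploit the queue/Lagrange-multiplier interpretation of the deficit counter $d_l(k)$ together with the positive recurrence that Lemma~\ref{expected_drift_kc} and Corollary~\ref{def_queue_bound_kc} have already established. The starting point is the update rule $d_l(k+1) = [\, d_l(k) + \tilde{a}_{il}(k) - I_{il}^*(a_i(k),c(k),d(k),q(k)) \,]^+$. Since $[\alpha]^+ \geq \alpha$ for every $\alpha \in \mathbb{R}$, one gets the one-step lower bound $d_l(k+1) \geq d_l(k) + \tilde{a}_{il}(k) - I_{il}^*(a_i(k),c(k),d(k),q(k))$, which is the only structural fact about the dynamics we need.

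Next I would telescope this inequality over $k = 1,\dots,K$ to obtain $d_l(K+1) - d_l(1) \geq \sum_{k=1}^{K} \tilde{a}_{il}(k) - \sum_{k=1}^{K} I_{il}^*(a_i(k),c(k),d(k),q(k))$. Dividing by $K$ and taking expectations, and using that $\tilde{a}_{il}(k)$ is binomial with parameters $a_{il}(k)$ and $1-p_l$ where $a_{il}(k)$ has mean $\lambda_l$ and the thinning coin tosses are independent of everything (so $E[\tilde{a}_{il}(k)] = \lambda_l(1-p_l)$ unconditionally), this rearranges to
\[
\frac{1}{K}\sum_{k=1}^{K} E\!\left[ I_{il}^*(a_i(k),c(k),d(k),q(k)) \right] \;\geq\; \lambda_l(1-p_l) \;-\; \frac{E[d_l(K+1)]}{K} \;+\; \frac{E[d_l(1)]}{K}.
\]
Finally I would take $\liminf_{K\to\infty}$: assuming the chain starts from a fixed finite state, $E[d_l(1)]<\infty$ and the last term vanishes; and by Corollary~\ref{def_queue_bound_kc}, $\limsup_{k\to\infty} E[d_l(k)] \leq B_4 + B_5/\epsilon < \infty$, so $E[d_l(K+1)]/K \to 0$. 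Pulling the expectation outside the finite sum by linearity then yields exactly $\liminf_{K\to\infty} E\big[ \frac{1}{K}\sum_{k=1}^{K} I_{il}^*(a_i(k),c(k),d(k),q(k)) \big] \geq \lambda_l(1-p_l)$.

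The only real obstacle is justifying that $E[d_l(K+1)]/K \to 0$, i.e.\ that the deficit queue does not grow linearly in $K$; this is precisely what positive recurrence / the uniform bound of Corollary~\ref{def_queue_bound_kc} buys us, so once those results are invoked the remainder is a deterministic telescoping computation. The one point to be careful about is that $E[\tilde{a}_{il}(k)] = \lambda_l(1-p_l)$ must hold unconditionally rather than only conditionally on some filtration — which it does, since frame arrivals are i.i.d.\ across frames and the packet-level thinning is independent of the scheduler's state $(d(k),q(k))$ — so no subtle measurability issues arise.
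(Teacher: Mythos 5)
Your argument is correct, and it is precisely the formalization of what the paper sketches: the paper omits a formal proof, stating only that the result is ``an obvious consequence of the stability of the deficit counters,'' and your telescoping of $d_l(k+1) \geq d_l(k) + \tilde{a}_{il}(k) - I_{il}^*(\cdot)$ combined with the $O(1/\epsilon)$ bound from Corollary~\ref{def_queue_bound_kc} (which forces $E[d_l(K+1)]/K \to 0$) is exactly that argument carried out in detail. The one point you flag --- that $E[\tilde{a}_{il}(k)] = \lambda_l(1-p_l)$ unconditionally --- is indeed fine since the frame arrivals are i.i.d.\ and the thinning coins are independent of the state, so nothing further is needed.
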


The above corollary simply states that the arrival rate into the deficit counter is less than or equal to the departure rate. This result is an obvious consequence of the stability of the deficit counters and so a formal proof is not provided here.

Now we are ready to prove that our online algorithm is within $O(\epsilon)$ of the optimal value.
\begin{theorem}
\label{optimality_online_kc}
For any $\epsilon >0$, if $\mu^*_{il} > \lambda_l(1-p_l)$ and $\mu^*_{el} > x_l^*$ for all $l \in \mathcal{L}$, then
\begin{align*}
	\limsup_{K \rightarrow \infty} & E \left[ \sum_{l \in \mathcal{L}} U_l(x_l^*) + w_l \mu^*_{il} - \sum_{l \in \mathcal{L}} \frac{1}{K} \sum_{k=1}^K U_l(\tilde{x}_l^*(k)) \right. \\
	& \left. - \sum_{l \in \mathcal{L}} \frac{1}{K} \sum_{k=1}^K w_l I_{il}^*(a_i(k),c(k),d(k),q(k)) \right] \leq B \epsilon
\end{align*}
for some $B>0$, where $(x^*, \mu_i^*)$ is the solution to (\ref{offline_opt_kc}), $\tilde{x}^*(k)$ is the solution to (\ref{online_opt_cc_kc}), and $I_i^*(a_i(k),c(k),d(k),q(k))$ is obtained from the solution to (\ref{online_opt_sch_kc}).
$\hfill \diamond$
\end{theorem}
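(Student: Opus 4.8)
The plan is a standard Lyapunov-drift argument built directly on Lemma~\ref{expected_drift_kc}; the substantive work is already contained in that lemma, and what remains here is a telescoping sum followed by a limiting argument. First I would take the conditional drift bound of Lemma~\ref{expected_drift_kc}, which holds pointwise in $(d,q)$ with constants independent of $(d,q)$ and of $k$, and take expectation over the law of $(d(k),q(k))$. This replaces the conditional drift by $E[V(d(k+1),q(k+1))] - E[V(d(k),q(k))]$ and turns the right-hand side into a statement about the unconditional expectations of $\sum_l d_l(k)$, $\sum_l q_l(k)$, $U_l(\tilde{x}_l^*(k))$, and $I_{il}^*(a_i(k),c(k),d(k),q(k))$.

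Next I would rearrange so as to isolate the quantity of interest:
\begin{align*}
	\frac{1}{\epsilon}\, E\Big[ \sum_{l}\big(U_l(x_l^*)-U_l(\tilde{x}_l^*(k))\big) &+ \sum_l w_l\big(\mu_{il}^* - I_{il}^*(a_i(k),c(k),d(k),q(k))\big)\Big] \\
	\leq B_1 - \big(E[V(d(k+1),q(k+1))] &- E[V(d(k),q(k))]\big) - B_2\, E\Big[\sum_l d_l(k)\Big] - B_3\, E\Big[\sum_l q_l(k)\Big].
\end{align*}
Because $B_2,B_3>0$ and $d_l(k),q_l(k)\ge 0$, the last two terms are non-positive and may be dropped. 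Summing over $k=1,\dots,K$ telescopes the Lyapunov differences into $E[V(d(1),q(1))] - E[V(d(K+1),q(K+1))] \le E[V(d(1),q(1))]$, using $V\ge 0$. Dividing by $K$, multiplying by $\epsilon$, and collecting the time average inside a single expectation by linearity yields an upper bound of $\epsilon B_1 + \epsilon\, E[V(d(1),q(1))]/K$ on the expression appearing in the theorem. Taking $\limsup_{K\to\infty}$ removes the second term, provided the chain starts from a state with finite $E[V(d(1),q(1))]$ — which we may assume without loss of generality (for instance, from the empty state) — and the claim follows with $B=B_1$.

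The step needing the most care is the bookkeeping around the factor $1/\epsilon$ inherited from Lemma~\ref{expected_drift_kc}: one must verify that multiplying through by $\epsilon$ leaves exactly the utility gap and the weighted inelastic-service gap, turns $B_1$ into $\epsilon B_1$, and that $B_1$ (as well as $B_2,B_3$) carries no hidden $\epsilon$-dependence — these constants depend on $T$, on the second moments of the arrival and channel processes, and on $w$, but not on $\epsilon$, so the resulting bound is genuinely $O(\epsilon)$. A minor point worth noting is that the positive recurrence of $(d(k),q(k))$ remarked after Lemma~\ref{expected_drift_kc} is not actually needed for this theorem: the argument uses only $V\ge 0$ and the per-frame drift inequality, so the telescoping is valid unconditionally; positive recurrence is what makes the accompanying deficit and queue-length bounds of Corollary~\ref{def_queue_bound_kc} meaningful, not this optimality estimate.
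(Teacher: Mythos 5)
Your proof is correct and follows essentially the same route as the paper's: rearrange the drift inequality of Lemma~\ref{expected_drift_kc}, discard the non-positive $-B_2\sum_l d_l - B_3\sum_l q_l$ terms, telescope the Lyapunov differences over $k=1,\dots,K$, divide by $K$, and use $V\ge 0$ together with $E[V(d(1),q(1))]<\infty$ to conclude with $B=B_1$. Your closing remarks — that the constants carry no hidden $\epsilon$-dependence and that positive recurrence is not needed for this particular estimate — are accurate.
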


In conclusion, there is a trade-off in choosing the parameter $\epsilon$: smaller values will achieve a solution closer to the optimal, but at the same time the deficit in service at the links and the aggregate queue length increase. The statement and the proof of Theorem \ref{optimality_online_kc} follows the techniques in \cite{Neely05}. The result can also be derived, in a slightly different form, using the techniques in \cite{Stolyar05}. A closely related result can be obtained using the methods in \cite{Eryilmaz05}.

%
%
\section{Unknown Channel State}
\label{unknown_channel_case}

The analysis for the unknown channel case is similar to the one we presented for the known channel case, so in this section we will only highlight the differences.

A feasible schedule $s=( s_{il,t}, s_{el,t} )$ is such that $s_{il,t}$, $s_{el,t}$ respectively denote the number of inelastic and elastic packets that can be scheduled for transmission at link $l \in \mathcal{L}$ and time $t \in \{ 1, 2, \ldots, T \}$ without violating any interference constraints. Assuming the inelastic arrivals are given by $a_i$, and since we can only schedule at most one packet per link at every time slot, we have the following constraints:
\begin{equation}
\label{constraint_inelastic_uc}
 \sum_{t=1}^{T} s_{il,t} \leq a_{il} \mbox{ for all } l \in \mathcal{L} \mbox{ and }
\end{equation}
\begin{equation}
\label{constraint_slot_uc}
 s_{il,t} + s_{el,t} \leq 1 \mbox{ for all } l \in \mathcal{L} \mbox{ and } t \in \{ 1, 2, \ldots, T \}.
\end{equation}
We denote by $\mathcal{S}(a_i)$ the set of all feasible schedules for fixed arrivals, capturing any interference constraints we have on our network, and satisfying (\ref{constraint_inelastic_uc}) and (\ref{constraint_slot_uc}).

Our goal now is to find a function $Pr(s | a_i)$ which is the probability of using schedule $s \in \mathcal{S}(a_i)$ when the inelastic arrivals are given by $a_i$, subject to the constraint that the loss probability at link $l \in \mathcal{L}$ due to deadline expiry cannot exceed $p_l$. For elastic traffic, we still want to select the vector $a_e$ such that we maximize the total utility while keeping the queues stable.

For a given distribution $Pr(s | a_i)$ we have that $\mu_{il}(a_i)$ is the expected number of attempted inelastic transmissions if arrivals are given by $a_i$. Similarly, $\mu_{el}(a_i)$ denotes the expected number of times link $l$ is scheduled to serve elastic packets in a given frame. As before, we have the following constraints:
\begin{equation*}
 \mu_{il}(a_i) \leq \sum\limits_{s} \sum_{t=1}^{T} s_{il,t} Pr(s | a_i)
\end{equation*}
\begin{equation*}
 \mu_{el}(a_i) \leq \sum\limits_{s} \sum_{t=1}^{T} s_{el,t} Pr(s | a_i)
\end{equation*}

%

When the (unknown) channel state is $c$, we have that $c_l \mu_{il}(a_i)$ is the expected number of successful inelastic transmissions per frame at link $l$ for fixed arrivals, while $c_l \mu_{el}(a_i)$ is the expected service to link $l$ for inelastic arrivals. Thus, the expected service for mixed traffic at link $l$ is given by
\begin{equation*}
	\mu_{il} \stackrel{def}{=} \sum\limits_{a_i} \sum\limits_{c} c_l \mu_{il}(a_i) Pr(c) Pr(a_i)
\end{equation*}
\begin{equation*}
	\mu_{el} \stackrel{def}{=} \sum\limits_{a_i} \sum\limits_{c} c_l \mu_{el}(a_i) Pr(c) Pr(a_i).
\end{equation*}

Simplifying both expressions we get
\begin{equation*}
	\mu_{il} = \sum\limits_{a_i} \bar{c}_l \mu_{il}(a_i) Pr(a_i)
\end{equation*}
\begin{equation*}
	\mu_{el} = \sum\limits_{a_i} \bar{c}_l \mu_{el}(a_i) Pr(a_i).
\end{equation*}

Due to service requirements and capacity constraints we need that
\begin{equation*}
	\mu_{il} \geq \lambda_l(1-p_l) \mbox{ and } x_l \leq \mu_{el}.
\end{equation*}
%

With the definitions and constraints stated above we can formulate the optimization problem in a similar way as in (\ref{offline_opt_kc}).
%
%

The only difference with the known channel state case is the scheduling algorithm. Assuming inelastic arrivals are given by $a_i(k)$ the scheduling algorithm is given by
\begin{align}
	& \tilde{s}^*(a_i(k),d(k),q(k)) \in \notag \\
	& \mathop{\arg\max}\limits_{s \in \mathcal{S}(a_i(k))} \sum\limits_{l \in \mathcal{L}} \left\{ [\frac{1}{\epsilon} w_l+d_l(k)] \bar{c}_l \sum_{t=1}^{T} s_{il,t} + q_l(k) \bar{c}_l \sum_{t=1}^{T} s_{el,t} \right\}. \notag
\end{align}

The main difference in the scheduling algorithm compared to the known channel state case is that the network now uses the expected channel state in making scheduling decisions. Thus, the network needs to know or estimate $\bar{c}_l$ as in \cite{Hou09a}.
%

Similar results can be proved for this algorithm using the techniques developed in Section \ref{convergence_kc}, whereby one can show that the algorithm meets all the inelastic QoS constraints, the total expected service deficits and the queue lengths have a $O(1 / \epsilon)$ bound, and the mean value of the objective is within $O(\epsilon)$ of the optimal value.

%
%


%
%
\section{Simulations}

The purpose of this simulation study is to understand how the parameter $\epsilon$ and the link weights $w_l$ impact the performance of the algorithm,  and how a greedy heuristic can be used to implement the optimal scheduler. We simulate a 10-link network with an interference graph given by Fig. \ref{interference_graph}, where each node represents a link and each edge means that the two adjacent links cannot be scheduled simultaneously. For example, if link 1 is scheduled, then links 2, 4, and 7 cannot be activated. The required loss probability due to deadline expiry of inelastic packets is set to 0.1, the link arrivals are assumed to have a Bernoulli distribution with mean 0.6 packets/frame, and there are 3 time slots per frame. The channel for every link is assumed to have a Bernoulli distribution with mean 0.96, and we get to know the channel state at the beginning of the frame. We set $U_l(x_l)=\log(x_l)$ for all links. The simulation time was $10^6$ frames.

\begin{figure}[t]
	\centering
	\includegraphics[angle=0, width=2.0in]{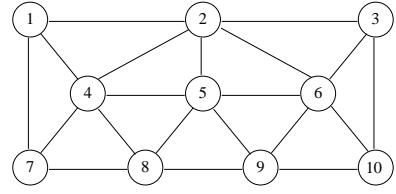}
	\caption{Interference graph used in the simulations}
	\label{interference_graph}
\end{figure}

As can be noted from (\ref{online_opt_sch_kc}), the max-weight scheduler requires that we do an exhaustive search to find the optimal schedule at every frame. For large networks this can become a burden due to the large search space; thus we explore a greedy heuristic and check how close it is to the optimal solution: at any given time slot, the greedy scheduler orders all links according to their weights. The greedy scheduler adds one of the links with the largest weight to the schedule, then removes all links that interfere with this link from the graph, then schedules a link with the largest weight among the remaining links, and so on. This procedure continues until no more links can be scheduled.


In Figs. \ref{def_queue_ten_w0}, \ref{def_queue_ten_w3}, and \ref{def_queue_ten_w6}, we plot the expected values of the deficit counters and queues per link for various values of $w_l$, and compare their evolution for both the scheduler with optimal decisions and the greedy scheduler.

\begin{figure}[t]
	\centering
	\includegraphics[angle=270, width=3.3in]{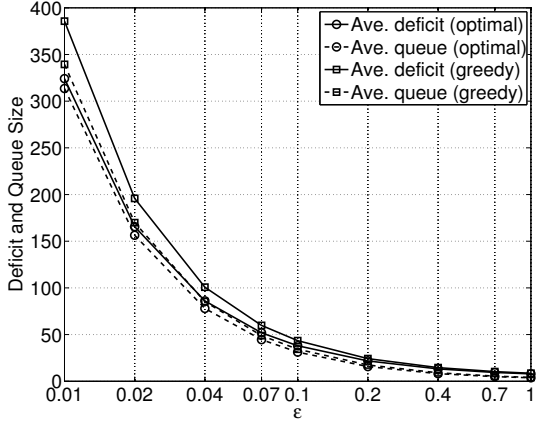}
	\caption{Deficit size and queue length when $w_l=0$}
	\label{def_queue_ten_w0}
\end{figure}
\begin{figure}[t]
	\centering
	\includegraphics[angle=270, width=3.3in]{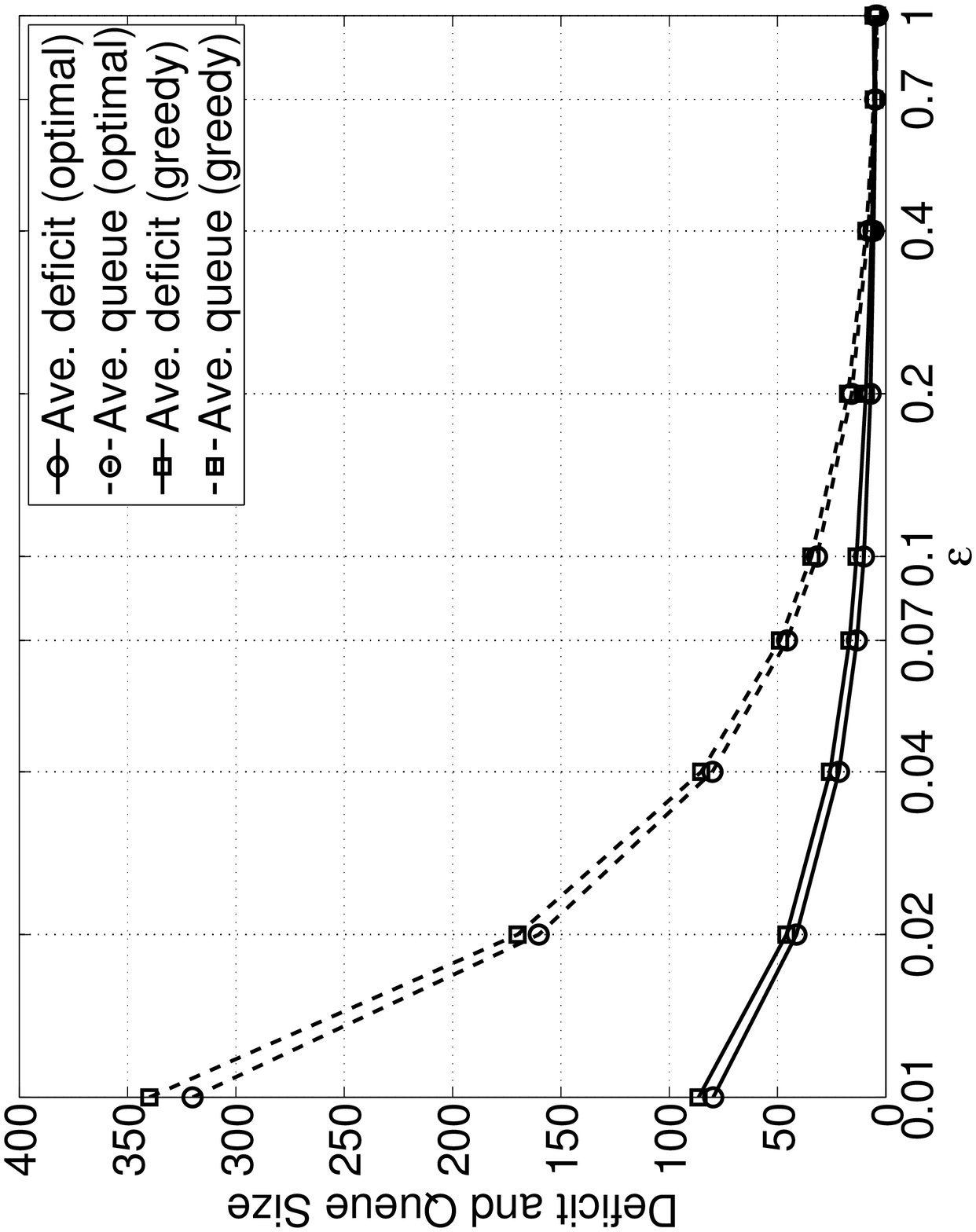}
	\caption{Deficit size and queue length when $w_l=3$}
	\label{def_queue_ten_w3}
\end{figure}
\begin{figure}[t]
	\centering
	\includegraphics[angle=270, width=3.3in]{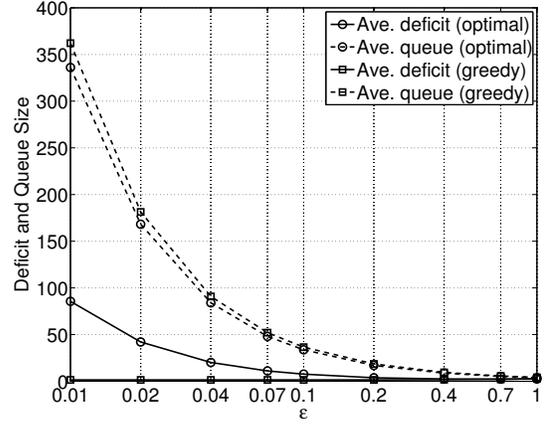}
	\caption{Deficit size and queue length when $w_l=6$}
	\label{def_queue_ten_w6}
\end{figure}

We see that as $w_l$ increases, the deficit counters become small. The upper bound in Corollary \ref{def_queue_bound_kc} only suggests that the sum of the deficit counters and queues is $O(1 / \epsilon)$. Thus, it is interesting to note that by changing $w_l$, one can nearly eliminate the backlog in deficit for inelastic traffic while maintaining the same order of queue sizes. The reason for this can be understood by examining the scheduling algorithm (\ref{online_opt_sch_kc}). Note that the algorithm gives priority to elastic traffic if queues are larger than counters. When $w_l$ is small compared to $\epsilon$, the effect of $w_l$ is negligible in the scheduling algorithm. On the other hand, when $w_l$ is $O(1)$, $w_l / \epsilon$ is $O(1 / \epsilon)$ which is comparable to the queue lengths and hence, the deficit does not have to be large to provide service to inelastic traffic under algorithm (\ref{online_opt_sch_kc}).

It must be noted that small deficit counters mean that there is a small backlog in providing acceptable service to inelastic arrivals. For the case of real-time traffic this is a desirable property, since we do not want to have large variations in the service provided that could affect the perceived quality. Thus, even if fair allocation of bandwidth beyond the minimum is not required for inelastic flows, choosing $w_l$ an order of magnitude larger than $\epsilon$ is desirable to maintain small deficits.

As can be noted, the greedy scheduler seems to give lower deficit values than the optimal scheduler for larger values of $w_l$. We believe that the reason is that weights given to inelastic flows increase with increasing $w_l$ and therefore, the greedy scheduler picks them first. However, our optimality goal is given by (\ref{offline_opt_kc}) which is determined by the rates received by the various flows. The rates achieved by the two schedulers are quite close in the simulations, as seen in Figs. \ref{rate_ten_w0}, \ref{rate_ten_w3}, and \ref{rate_ten_w6}, while keeping the dropping probabilities below the requirement, as shown in Figs. \ref{drop_prob_ten_w0}, \ref{drop_prob_ten_w3}, and \ref{drop_prob_ten_w6}.

\begin{figure}[t]
	\centering
	\includegraphics[angle=270, width=3.3in]{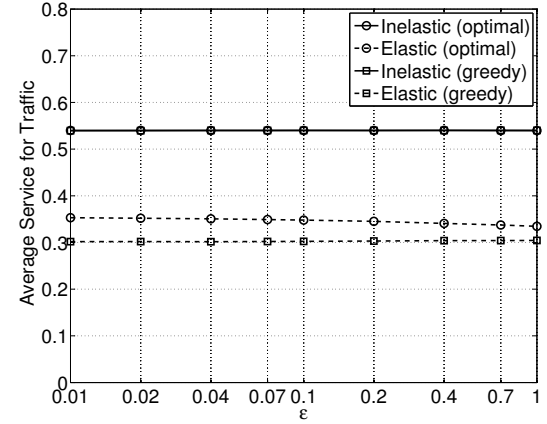}
	\caption{Average service when $w_l=0$}
	\label{rate_ten_w0}
\end{figure}
\begin{figure}[t]
	\centering
	\includegraphics[angle=270, width=3.3in]{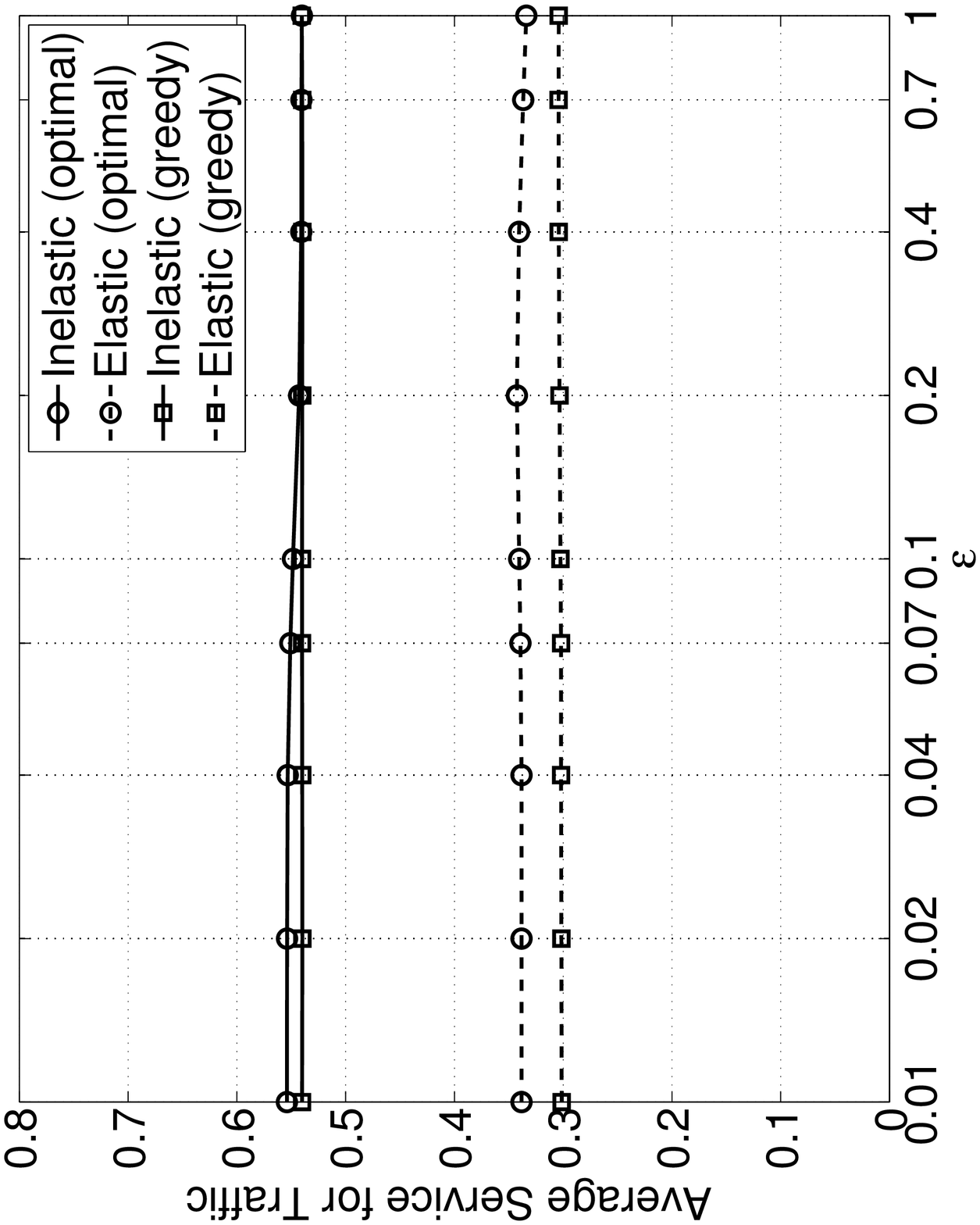}
	\caption{Average service when $w_l=3$}
	\label{rate_ten_w3}
\end{figure}
\begin{figure}[t]
	\centering
	\includegraphics[angle=270, width=3.3in]{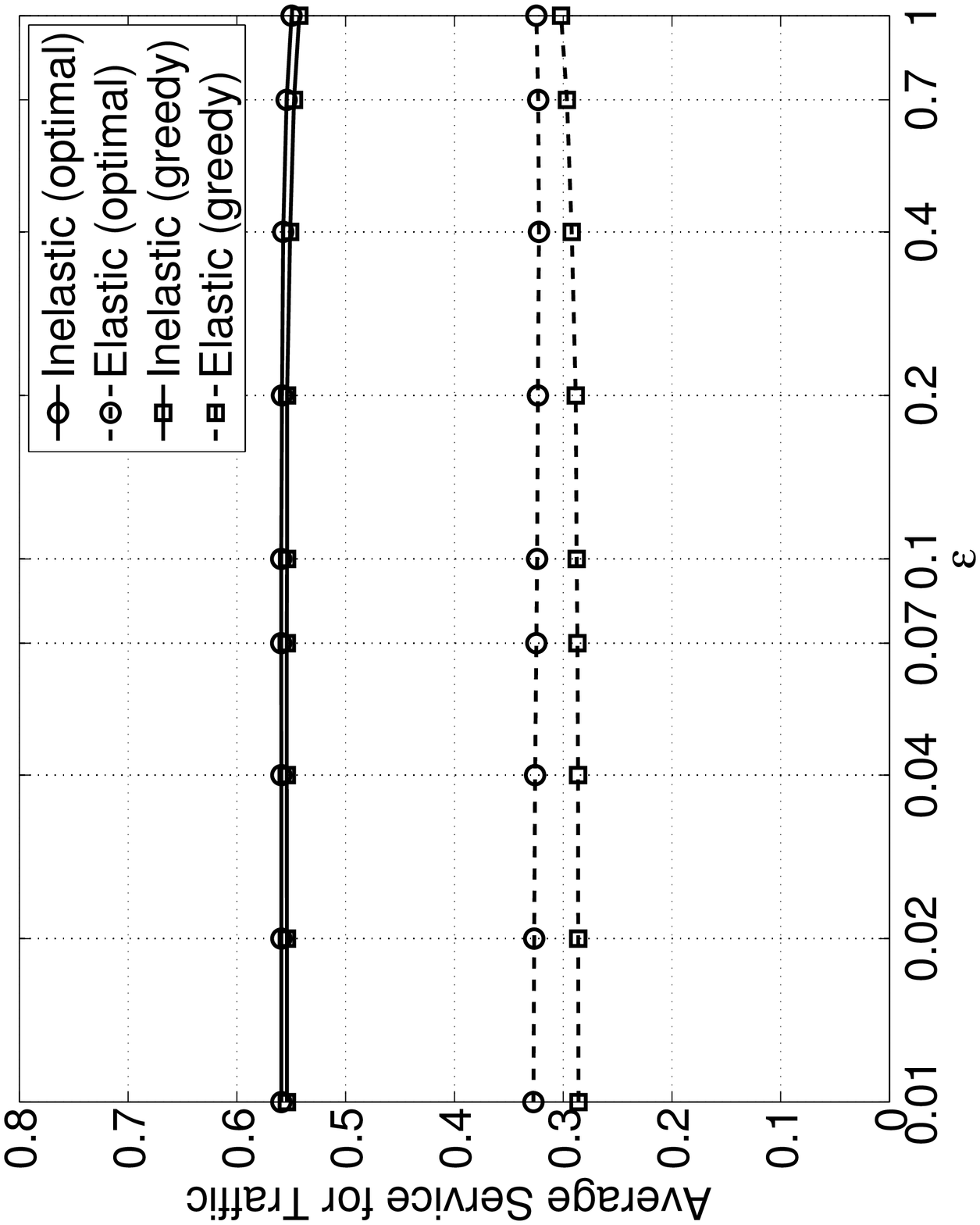}
	\caption{Average service when $w_l=6$}
	\label{rate_ten_w6}
\end{figure}

\begin{figure}[t]
	\centering
	\includegraphics[angle=270, width=3.3in]{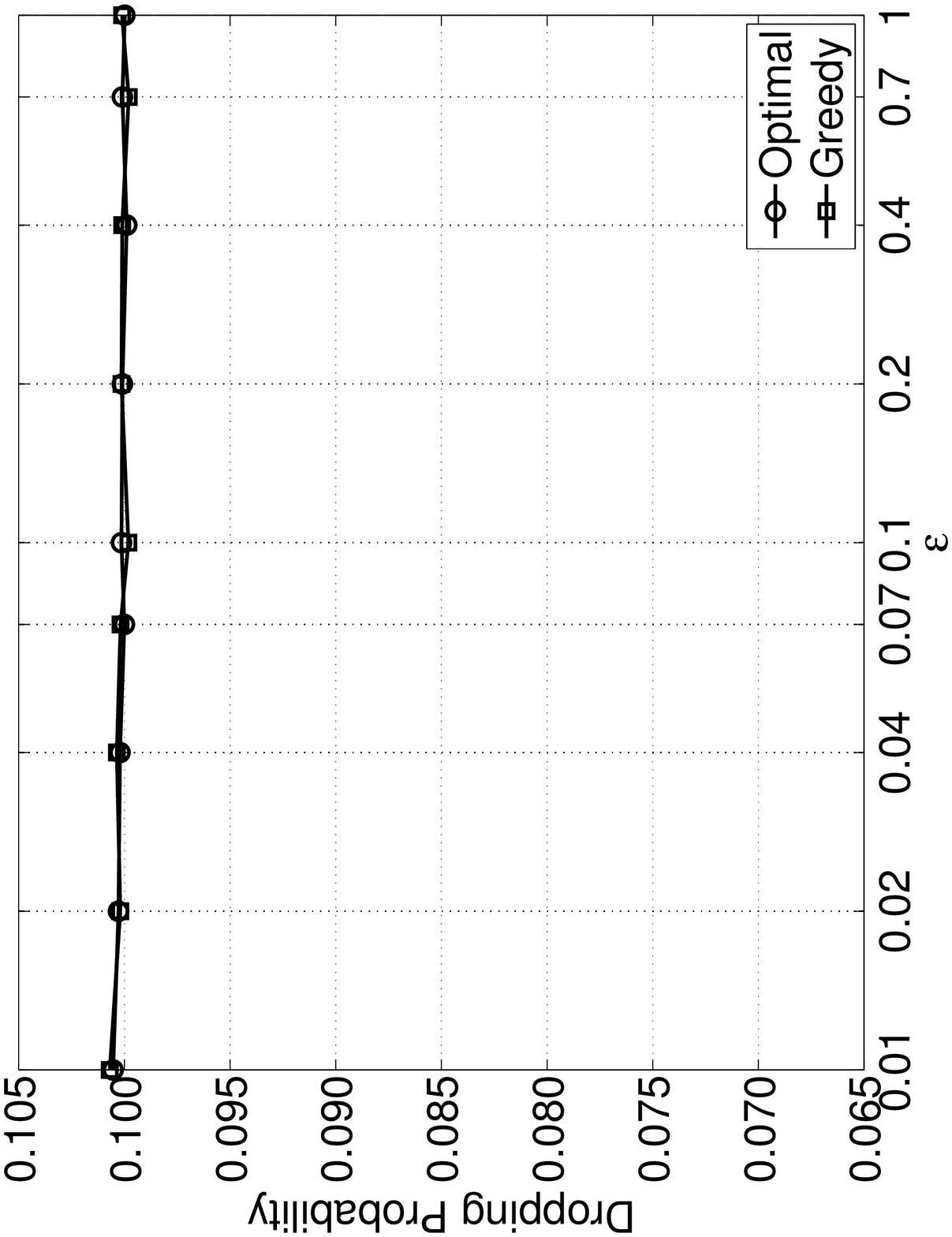}
	\caption{Dropping probability when $w_l=0$}
	\label{drop_prob_ten_w0}
\end{figure}
\begin{figure}[t]
	\centering
	\includegraphics[angle=270, width=3.3in]{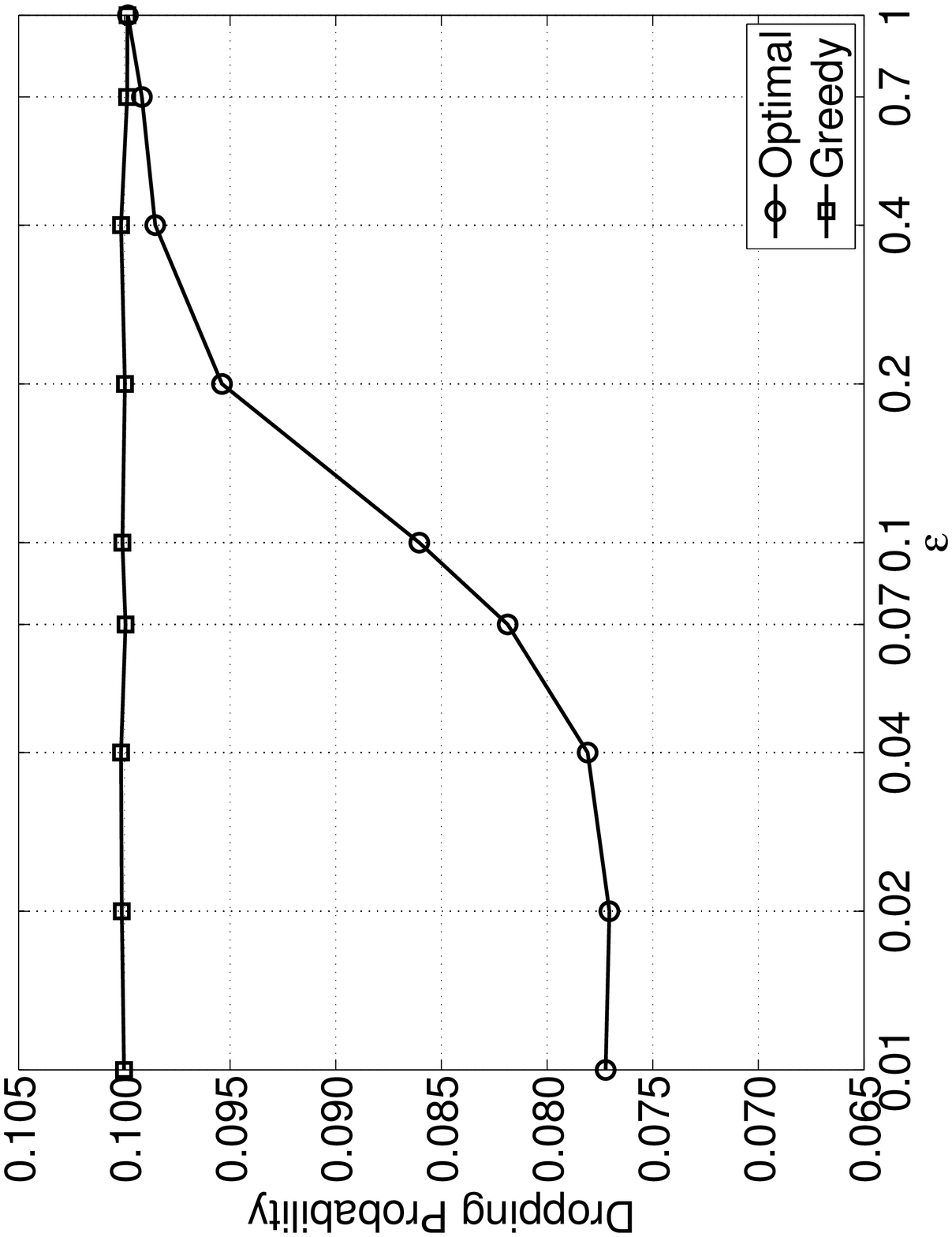}
	\caption{Dropping probability when $w_l=3$}
	\label{drop_prob_ten_w3}
\end{figure}
\begin{figure}[t]
	\centering
	\includegraphics[angle=270, width=3.3in]{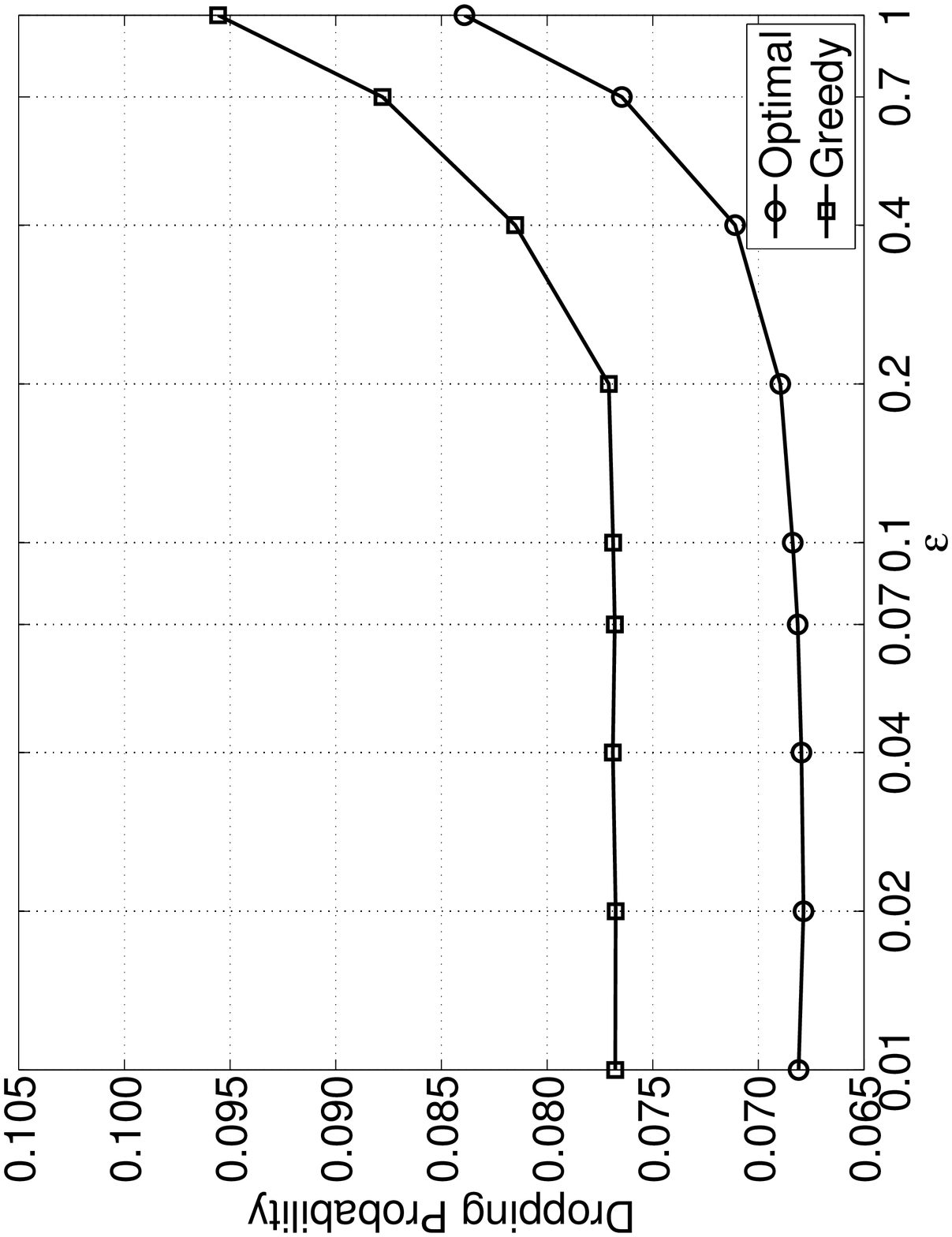}
	\caption{Dropping probability when $w_l=6$}
	\label{drop_prob_ten_w6}
\end{figure}

%
%
\section{Conclusions}

In this paper we have presented an optimization framework for the problem of congestion control and scheduling of elastic and inelastic traffic in ad hoc wireless networks. The model was developed for general interference graphs, general arrivals and time-varying channels. Using a dual function approach we presented a decomposition of the problem into an online algorithm that is able to make optimal decisions while keeping the network stable and fulfilling the inelastic flow's QoS constraints. A key result is that, through the use of deficit counters, one can treat the scheduling problem for elastic and inelastic flows in a common framework. It is also interesting to note that the deficit counters introduced in \cite{Hou09a, Hou09b, Hou09c} have the interpretation of Lagrange multipliers. Simulations corroborate our results and show the dependency of the performance of the algorithm on the auxiliary parameter $\epsilon$ and its role into assigning resources to both elastic and inelastic traffic. We note that, in the unknown channel case, we do not consider channel feedback at the end of each time slot as in \cite{Hou09a, Hou09b, Hou09c}. This will be addressed in future work.

%
%
\appendices
\section{Proof of Lemma \ref{expected_drift_kc}}

To prove Lemma \ref{expected_drift_kc}, we start by first proving two auxiliary lemmas and then stating a fact.

\begin{lemma}
\label{bound_d_kc}
Given that at frame $k$ we have the event $d(k)=d$ and $q(k)=q$, then
\begin{align*}
	E & \left[ \frac{1}{2} \sum_{l \in \mathcal{L}} \{ [ d_l + \tilde{a}_{il}(k) - I_{il}^*(a_i(k),c(k),d,q) ]^+ \}^2 \right] - \sum_{l \in \mathcal{L}} \frac{d_l^2}{2} \\
 	\leq & B_6 + \sum_{l \in \mathcal{L}} d_l \lambda_l (1-p_l) \notag \\
 	& - E \left[ \sum_{l \in \mathcal{L}} \left( \frac{1}{\epsilon}w_l + d_l \right) I_{il}^*(a_i(k),c(k),d,q) \right. \notag \\
 	& \left. - \sum_{l \in \mathcal{L}} \frac{1}{\epsilon}w_l I_{il}^*(a_i(k),c(k),d,q) \right] \notag \\
\end{align*}
for some non-negative constant $B_6$, and where $I_{il}^*(a_i(k),c(k),d,q)$ is given by the solution to (\ref{online_opt_sch_kc}).
$\hfill \diamond$
\end{lemma}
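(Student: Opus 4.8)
The plan is to apply the standard one-step Lyapunov-drift bound to each deficit queue separately and then recombine terms so that the scheduling objective of (\ref{online_opt_sch_kc}) becomes visible. Fix a link $l$ and write $d_l(k+1)=[\,\alpha_l\,]^+$ with $\alpha_l = d_l + \tilde a_{il}(k) - I_{il}^*(a_i(k),c(k),d,q)$. Using the elementary inequality $([\,\alpha\,]^+)^2 \le \alpha^2$,
\[
\tfrac12\big\{[\,d_l + \tilde a_{il}(k) - I_{il}^*\,]^+\big\}^2 - \tfrac12 d_l^2 \;\le\; \tfrac12\big(\tilde a_{il}(k) - I_{il}^*\big)^2 + d_l\big(\tilde a_{il}(k) - I_{il}^*\big).
\]
Summing over $l \in \mathcal{L}$ and taking the conditional expectation given $d(k)=d$, $q(k)=q$ reduces the lemma to bounding a quadratic sum and a linear sum.

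For the quadratic sum I would use feasibility: every schedule in $\mathcal{S}(a_i(k),c(k))$ satisfies (\ref{constraint_inelastic_kc}), so $0 \le I_{il}^* \le a_{il}(k)$, and since $\tilde a_{il}(k)$ is binomial on $a_{il}(k)$ trials, $0 \le \tilde a_{il}(k) \le a_{il}(k)$. Hence $(\tilde a_{il}(k) - I_{il}^*)^2 \le a_{il}(k)^2$, and taking expectations gives $\tfrac12 \sum_{l}E\big[(\tilde a_{il}(k) - I_{il}^*)^2\big] \le \tfrac12 \sum_{l} E[a_{il}(k)^2] = \tfrac12\sum_l(\lambda_l^2+\sigma_{il}^2)$, which is finite by the bounded-variance hypothesis and which I take as the constant $B_6$.

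For the linear sum, the point is that $d_l$ is deterministic under the conditioning, so $E[d_l(\tilde a_{il}(k) - I_{il}^*)] = d_l\,E[\tilde a_{il}(k)] - d_l\,E[I_{il}^*]$. By the tower property and the coin-flipping construction of $\tilde a_{il}(k)$, $E[\tilde a_{il}(k)] = E[(1-p_l)a_{il}(k)] = \lambda_l(1-p_l)$, irrespective of any dependence $\tilde a_{il}(k)$ may have on the schedule through $a_i(k)$. This produces the terms $\sum_l d_l\lambda_l(1-p_l) - \sum_l d_l\,E[I_{il}^*]$. Finally I would rewrite $d_l I_{il}^* = (\tfrac1\epsilon w_l + d_l)I_{il}^* - \tfrac1\epsilon w_l I_{il}^*$ inside the expectation, which puts the bound in exactly the stated form; this last rearrangement is purely cosmetic but is precisely what lets the drift be compared against the max-weight scheduling rule when Lemma \ref{expected_drift_kc} is proved.

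There is no real obstacle here. The only mild subtlety is the possible statistical dependence between $\tilde a_{il}(k)$ and $I_{il}^*$ (both are functions of $a_i(k)$): conditioning on the arrivals and using the tower property makes this irrelevant for the mean, and the almost-sure bound $\tilde a_{il}(k)\le a_{il}(k)$, together with $I_{il}^*\le a_{il}(k)$, makes it irrelevant for the second moment, so no independence assumption between arrivals and scheduling is needed. The finite-variance assumption on $a_{il}$ is used only to guarantee $B_6<\infty$.
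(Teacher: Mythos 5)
Your proposal is correct and follows essentially the same route as the paper: drop the projection via $([\alpha]^+)^2\le\alpha^2$, expand the square, bound the second-order term by a constant using the second moments of the arrivals, use $E[\tilde a_{il}(k)]=\lambda_l(1-p_l)$ via the tower property, and cosmetically insert the $\frac{1}{\epsilon}w_l$ terms. The only (immaterial) difference is that you bound $(\tilde a_{il}(k)-I_{il}^*)^2\le a_{il}^2(k)$ while the paper uses $(\tilde a_{il}(k)-I_{il}^*)^2\le \tilde a_{il}^2(k)+a_{il}^2(k)$, so your constant $B_6=\frac12\sum_l(\lambda_l^2+\sigma_{il}^2)$ is slightly tighter than theirs but equally valid for the lemma.
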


\begin{proof}
\begin{align}
 	E & \left[ \frac{1}{2} \sum_{l \in \mathcal{L}} \{ [ d_l + \tilde{a}_{il}(k) - I_{il}^*(a_i(k),c(k),d,q) ]^+ \}^2 \right] - \sum_{l \in \mathcal{L}} \frac{d_l^2}{2} \notag \\
	\leq & E \left[ \frac{1}{2} \sum_{l \in \mathcal{L}} [ d_l + \tilde{a}_{il}(k) - I_{il}^*(a_i(k),c(k),d,q) ]^2 \right] - \sum_{l \in \mathcal{L}} \frac{d_l^2}{2} \notag \\
 	= & E \left[ \sum_{l \in \mathcal{L}} d_l [ \tilde{a}_{il}(k) - I_{il}^*(a_i(k),c(k),d,q) ] \right. \notag \\
 	& \left. + \frac{1}{2} \sum_{l \in \mathcal{L}} [ \tilde{a}_{il}(k) - I_{il}^*(a_i(k),c(k),d,q) ]^2 \right] \notag \\
 	\leq & E \left[ \sum_{l \in \mathcal{L}} d_l \tilde{a}_{il}(k) - d_l I_{il}^*(a_i(k),c(k),d,q) \right. \notag \\
 	& \left. + \frac{1}{2} \sum_{l \in \mathcal{L}} \tilde{a}_{il}^2(k) + a_{il}^2(k) \right] \label{def_I_il} \\
 	\leq & B_6 + \sum_{l \in \mathcal{L}} d_l \lambda_l (1-p_l) \notag \\
 	& - E \left[ \sum_{l \in \mathcal{L}} \left( \frac{1}{\epsilon}w_l + d_l \right) I_{il}^*(a_i(k),c(k),d,q) \right. \notag \\
 	& \left. - \sum_{l \in \mathcal{L}} \frac{1}{\epsilon}w_l I_{il}^*(a_i(k),c(k),d,q) \right] \notag
\end{align}
where (\ref{def_I_il}) follows from the definition of $I_{il}^*(a_i(k),c(k),d,q)$ and
\begin{equation*}
	B_6 = \frac{1}{2} \sum_{l \in \mathcal{L}} ( \lambda_l^2 + \sigma_{il}^2 ) [ 1+(1-p_l)^2 ] + \lambda_l p_l (1-p_l).
\end{equation*}
\end{proof}

\begin{lemma}
\label{bound_q_kc}
Given that at frame $k$ we have the event $d(k)=d$ and $q(k)=q$, then
\begin{align*}
	E & \left[ \frac{1}{2} \sum_{l \in \mathcal{L}} \{ [ q_l + \tilde{a}_{el}(k) - I_{el}^*(a_i(k),c(k),d,q) ]^+ \}^2 \right] - \sum_{l \in \mathcal{L}} \frac{q_l^2}{2} \\
 	\leq & B_7 - \frac{1}{\epsilon} \sum_{l \in \mathcal{L}} [ U_l(x_l^*) - U_l(\tilde{x}_l^*(k)) ] \notag \\
 	& - \sum_{l \in \mathcal{L}} q_l \left\{ E \left[ I_{el}^*(a_i(k),c(k),d,q) \right] - x_l^* \right\} \notag
\end{align*}
for some constant $B_7>0$, where $x^*$ and $\tilde{x}^*(k)$ are the solutions to (\ref{offline_opt_kc}) and (\ref{online_opt_cc_kc}) respectively.
$\hfill \diamond$
\end{lemma}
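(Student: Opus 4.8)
The plan is to follow the proof of Lemma~\ref{bound_d_kc} almost line for line, with the congestion-control update (\ref{online_opt_cc_kc}) playing the role that the max-weight scheduling update played there. Throughout I would condition on the event $d(k)=d$, $q(k)=q$; the point to keep in mind is that this freezes $\tilde{x}_l^*(k)$ into a deterministic quantity (a function of $q$ only), while $\tilde{a}_{el}(k)$, $a_i(k)$ and $c(k)$ remain random and mutually independent, with $\tilde{a}_{el}(k)$ having mean $\tilde{x}_l^*(k)$ and variance at most $\sigma_e^2$. Write $I_{el}^* = I_{el}^*(a_i(k),c(k),d,q)$ for brevity.

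\emph{Step 1 (drop the projection and expand).} Using $([\alpha]^+)^2 \le \alpha^2$ for all $\alpha \in \mathbb{R}$, the left-hand side is at most $E\left[\frac12 \sum_{l} (q_l + \tilde{a}_{el}(k) - I_{el}^*)^2\right] - \sum_l \frac12 q_l^2$, and expanding the square rewrites this as $E\left[\sum_l q_l(\tilde{a}_{el}(k) - I_{el}^*) + \frac12 \sum_l (\tilde{a}_{el}(k) - I_{el}^*)^2\right]$. \emph{Step 2 (bound the quadratic term by a constant).} Bound $(\tilde{a}_{el}(k) - I_{el}^*)^2 \le \tilde{a}_{el}^2(k) + (I_{el}^*)^2$. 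By the slot constraint (\ref{constraint_slot_kc}), $I_{el}^* = \sum_{t=1}^T \tilde{s}_{el,t}^* \le T c_l$, so $E[(I_{el}^*)^2] \le T^2(\bar{c}_l^2 + \sigma_{cl}^2)$, and $E[\tilde{a}_{el}^2(k)] \le X_{max}^2 + \sigma_e^2$. Hence the expectation of this term is at most an explicit finite constant, uniform in $(d,q)$, which I call $B_7$ (for instance $B_7 = \frac12 \sum_{l \in \mathcal{L}} [X_{max}^2 + \sigma_e^2 + T^2(\bar{c}_l^2 + \sigma_{cl}^2)]$, or any valid upper bound).

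\emph{Step 3 (handle the linear term).} Since $\tilde{a}_{el}(k)$ is independent of $(a_i(k),c(k))$ and has mean $\tilde{x}_l^*(k)$, one gets $E\left[\sum_l q_l(\tilde{a}_{el}(k) - I_{el}^*)\right] = \sum_l q_l(\tilde{x}_l^*(k) - E[I_{el}^*])$. The one genuinely substantive move is to trade $\tilde{x}_l^*(k)$ for $x_l^*$ at the cost of a utility gap: since $0 \le x_l^* \le \max_{l'} x_{l'}^* < X_{max}$, the point $x_l^*$ is feasible for (\ref{online_opt_cc_kc}), so optimality of $\tilde{x}_l^*(k)$ gives $\frac1\epsilon U_l(\tilde{x}_l^*(k)) - q_l \tilde{x}_l^*(k) \ge \frac1\epsilon U_l(x_l^*) - q_l x_l^*$, i.e. $q_l \tilde{x}_l^*(k) \le q_l x_l^* - \frac1\epsilon[U_l(x_l^*) - U_l(\tilde{x}_l^*(k))]$. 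Substituting this into the linear term bounds it by $-\frac1\epsilon \sum_l [U_l(x_l^*) - U_l(\tilde{x}_l^*(k))] - \sum_l q_l\{E[I_{el}^*] - x_l^*\}$, and combining with Step 2 yields exactly the claimed inequality.

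I do not anticipate a real obstacle here: the only points requiring care are keeping straight which quantities are random versus frozen by the conditioning (in particular that $\tilde{x}_l^*(k)$ is deterministic once $q$ is fixed, so no Jensen-type issue arises), and checking that the bound $I_{el}^* \le T c_l$ does not involve $(d,q)$, so that $B_7$ is genuinely a constant. The only non-mechanical step is the appeal to the congestion controller's optimality in Step~3, which is the exact analogue of the appeal to scheduler optimality in Lemma~\ref{bound_d_kc}; this is also why the terms $q_l E[I_{el}^*]$ are deliberately left unsimplified, as they will later be grouped with the $(\tfrac1\epsilon w_l + d_l) I_{il}^*$ terms of Lemma~\ref{bound_d_kc} and bounded via max-weight optimality of $\tilde{s}^*$ when the two lemmas are assembled into Lemma~\ref{expected_drift_kc}.
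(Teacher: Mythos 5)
Your proposal is correct and matches the paper's proof essentially line for line: drop the projection via $([\alpha]^+)^2\le\alpha^2$, bound the quadratic term by $\tilde{a}_{el}^2(k)+c_l^2T^2$ to get the same constant $B_7=\frac{1}{2}\sum_{l}X_{max}^2+\sigma_e^2+(\bar{c}_l^2+\sigma_{cl}^2)T^2$, and exploit optimality of $\tilde{x}_l^*(k)$ in (\ref{online_opt_cc_kc}) against the feasible point $x_l^*$ to convert the linear term. Your remark about deliberately leaving $q_lE[I_{el}^*]$ unsimplified for later assembly with Lemma~\ref{bound_d_kc} is exactly how the paper proceeds.
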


\begin{proof}
\begin{align}
 	E & \left[ \frac{1}{2} \sum_{l \in \mathcal{L}} \{ [ q_l + \tilde{a}_{el}(k) - I_{el}^*(a_i(k),c(k),d,q) ]^+ \}^2 \right] - \sum_{l \in \mathcal{L}} \frac{q_l^2}{2} \notag \\
	\leq & E \left[ \frac{1}{2} \sum_{l \in \mathcal{L}} [ q_l + \tilde{a}_{el}(k) - I_{el}^*(a_i(k),c(k),d,q) ]^2 \right] - \sum_{l \in \mathcal{L}} \frac{q_l^2}{2} \notag \\
 	= & E \left[ \sum_{l \in \mathcal{L}} q_l [ \tilde{a}_{el}(k) - I_{el}^*(a_i(k),c(k),d,q) ] \right. \notag \\
 	& \left. + \frac{1}{2} \sum_{l \in \mathcal{L}} [ \tilde{a}_{el}(k) - I_{el}^*(a_i(k),c(k),d,q) ]^2 \right] \notag \\
 	\leq & E \left[ \sum_{l \in \mathcal{L}} q_l \tilde{a}_{el}(k) - q_l I_{el}^*(a_i(k),c(k),d,q) \right. \notag \\
 	& \left. + \frac{1}{2} \sum_{l \in \mathcal{L}} ( \tilde{a}_{el}^2(k) + c_l^2 T^2 ) \right] \label{def_I_el} \\
 	\leq & B_7 + \sum_{l \in \mathcal{L}} - [ \frac{1}{\epsilon}U_l(\tilde{x}_l^*(k)) - q_l \tilde{x}_l^*(k) ] + \sum_{l \in \mathcal{L}} \frac{1}{\epsilon}U_l(\tilde{x}_l^*(k)) \notag \\
 	& - \sum_{l \in \mathcal{L}} q_l E \left[ I_{el}^*(a_i(k),c(k),d,q) \right] \notag \\
 	\leq & B_7 + \sum_{l \in \mathcal{L}} - [ \frac{1}{\epsilon}U_l(x_l^*) - q_l x_l^* ] + \sum_{l \in \mathcal{L}} \frac{1}{\epsilon}U_l(\tilde{x}_l^*(k)) \label{def_opt_point} \\
 	& - \sum_{l \in \mathcal{L}} q_l E \left[ I_{el}^*(a_i(k),c(k),d,q) \right] \notag \\
 	= & B_7 - \frac{1}{\epsilon} \sum_{l \in \mathcal{L}} [ U_l(x_l^*) - U_l(\tilde{x}_l^*(k)) ] \notag \\
 	& - \sum_{l \in \mathcal{L}} q_l \left\{ E \left[ I_{el}^*(a_i(k),c(k),d,q) \right] - x_l^* \right\} \notag
\end{align}
where (\ref{def_I_el}) follows from the definition of $I_{el}^*(a_i(k),c(k),d,q)$,
\begin{equation*}
	B_7 = \frac{1}{2} \sum_{l \in \mathcal{L}} X_{max}^2 + \sigma_{e}^2 + ( \bar{c}_l^2 + \sigma_{cl}^2 ) T^2,
\end{equation*}
and (\ref{def_opt_point}) follows from the fact that $\tilde{x}^*(k)$ is the optimal point of (\ref{online_opt_cc_kc}).
\end{proof}

\begin{fact}
\label{fact_online_opt_kc}
The optimization in (\ref{online_opt_sch_kc}) can be performed over $\mathcal{S}(a_i(k),c(k))_\mathcal{CH}$, the convex hull of $\mathcal{S}(a_i(k),c(k))$; that is,
\begin{align*}
	& \max\limits_{s \in \mathcal{S}(a_i(k),c(k))} \sum\limits_{l \in \mathcal{L}} [\frac{1}{\epsilon} w_l+d_l(k)] \sum_{t=1}^{T} s_{il,t} + q_l(k) \sum_{t=1}^{T} s_{el,t} = \\
	& \max\limits_{s \in \mathcal{S}(a_i(k),c(k))_\mathcal{CH}} \sum\limits_{l \in \mathcal{L}} [\frac{1}{\epsilon} w_l+d_l(k)] \sum_{t=1}^{T} s_{il,t} + q_l(k) \sum_{t=1}^{T} s_{el,t}.
\end{align*}
The reason for this comes from the fact that the objective function is linear and therefore there must be an optimal point $s^*(a_i(k),c(k),d(k),q(k)) ) \in \mathcal{S}(a_i(k),c(k))$.
$\hfill \diamond$
\end{fact}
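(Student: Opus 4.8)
The plan is to reduce the claim to the elementary fact that a linear functional attains its maximum over a polytope at a vertex, and that the vertices of $\mathcal{S}(a_i(k),c(k))_\mathcal{CH}$ are themselves members of $\mathcal{S}(a_i(k),c(k))$. Abbreviate the objective in (\ref{online_opt_sch_kc}) as $f(s) = \sum_{l \in \mathcal{L}} \left\{ [\frac{1}{\epsilon} w_l + d_l(k)] \sum_{t=1}^{T} s_{il,t} + q_l(k) \sum_{t=1}^{T} s_{el,t} \right\}$, which is a linear function of $s$.

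First I would record that $\mathcal{S}(a_i(k),c(k))$ is a nonempty finite set: the all-zero schedule satisfies (\ref{constraint_inelastic_kc})--(\ref{constraint_slot_kc}), and since the $s_{il,t}, s_{el,t}$ are nonnegative integers bounded by $c_l$ through (\ref{constraint_slot_kc}), there are only finitely many feasible schedules. Consequently $\mathcal{S}(a_i(k),c(k))_\mathcal{CH}$ is a compact polytope and every maximum appearing in the statement is attained. Since $\mathcal{S}(a_i(k),c(k)) \subseteq \mathcal{S}(a_i(k),c(k))_\mathcal{CH}$, one inequality, $\max_{s \in \mathcal{S}(a_i(k),c(k))} f(s) \le \max_{s \in \mathcal{S}(a_i(k),c(k))_\mathcal{CH}} f(s)$, is immediate.

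For the reverse inequality I would take an arbitrary $\bar{s} \in \mathcal{S}(a_i(k),c(k))_\mathcal{CH}$, write it as a convex combination $\bar{s} = \sum_j \theta_j s^{(j)}$ with each $s^{(j)} \in \mathcal{S}(a_i(k),c(k))$, $\theta_j \ge 0$, and $\sum_j \theta_j = 1$, and use linearity of $f$ to get $f(\bar{s}) = \sum_j \theta_j f(s^{(j)}) \le \max_{s \in \mathcal{S}(a_i(k),c(k))} f(s)$. Taking the maximum over $\bar{s}$ gives the matching inequality, so the two maxima are equal; moreover any maximizer of $f$ over the finite set $\mathcal{S}(a_i(k),c(k))$ is simultaneously a maximizer over the convex hull, which is exactly the existence of the optimal schedule $s^*(a_i(k),c(k),d(k),q(k)) \in \mathcal{S}(a_i(k),c(k))$ asserted at the end. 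There is no genuine obstacle here; the only step that merits an explicit sentence is verifying nonemptiness and boundedness of $\mathcal{S}(a_i(k),c(k))$ so that the maxima are well defined, and that is immediate from the constraint set.
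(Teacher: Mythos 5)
Your proof is correct and follows essentially the same route as the paper: the paper's entire justification is the observation that the objective is linear, so a maximizer over the convex hull can be taken to be a point of the original finite set $\mathcal{S}(a_i(k),c(k))$. You simply make this explicit via the standard convex-combination argument (plus the routine check that the feasible set is nonempty and finite so the maxima are attained), which is exactly the content the paper leaves implicit.
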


\begin{proof}[Proof of Lemma \ref{expected_drift_kc}]
For the purpose of this proof, we define the capacity region for fixed arrival and channel states $a_i$ and $c$ as follows:
\begin{equation*}
	\mathcal{C}(a_i,c) \stackrel{def}{=}
	\left\{
	\begin{array}{l}
		( \bar{\mu}_{il}, \bar{\mu}_{el} )_{l \in \mathcal{L}}: \mbox{there exists } \bar{s} \in \mathcal{S}(a_i,c)_\mathcal{CH} \mbox{,} \\
		\bar{\mu}_{il}\leq\sum_{t=1}^{T} \bar{s}_{il,t} \mbox{ and } \bar{\mu}_{el}\leq\sum_{t=1}^{T} \bar{s}_{el,t}
	\end{array}
	\right\}.
\end{equation*}
Then, the overall capacity of the network is defined as
$\mathcal{C} \stackrel{def}{=}$
\begin{equation*}
	\left\{
	\begin{array}{l}
		( \mu_{il}, \mu_{el} )_{l \in \mathcal{L}} : \mbox{there exists } ( \bar{\mu}_{il}(a_i,c), \bar{\mu}_{el}(a_i,c) )_{l \in \mathcal{L}} \in \\
		\mathcal{C}(a_i,c) \mbox{ for all } a_i, c \mbox{ and } \mu_{il} = E[ \bar{\mu}_{il}(a_i,c) ] \mbox{, } \\
		\mu_{el} = E[ \bar{\mu}_{el}(a_i,c) ] \mbox{ for all } l \in \mathcal{L} \\
	\end{array}
	\right\}.
\end{equation*}

From Lemmas \ref{bound_d_kc} and \ref{bound_q_kc} we have:
\begin{align}
	E & \left[ V(d(k+1),q(k+1)) | d(k)=d, q(k)=q \right] - V(d,q) \notag \\
 	\leq & B_1 + \sum_{l \in \mathcal{L}} d_l \lambda_l (1-p_l) + \sum_{l \in \mathcal{L}} q_l x_l^* \notag \\
 	& - \frac{1}{\epsilon} \sum_{l \in \mathcal{L}} [ U_l(x_l^*) - U_l(\tilde{x}_l^*(k)) ] \notag \\
 	& - E \left[ \sum_{l \in \mathcal{L}} \left( \frac{1}{\epsilon}w_l + d_l \right) I_{il}^*(a_i(k),c(k),d,q) \right. \notag \\
 	& \left. + \sum_{l \in \mathcal{L}} q_l I_{el}^*(a_i(k),c(k),d,q) \right] \notag \\
 	& + \sum_{l \in \mathcal{L}} \frac{1}{\epsilon}w_l E \left[ I_{il}^*(a_i(k),c(k),d,q) \right] \notag \\
 	\leq & B_1 + \sum_{l \in \mathcal{L}} d_l \lambda_l (1-p_l) + \sum_{l \in \mathcal{L}} q_l x_l^* \label{fact_consequence} \\
 	& - \frac{1}{\epsilon} \sum_{l \in \mathcal{L}} [ U_l(x_l^*) - U_l(\tilde{x}_l^*(k)) ] \notag \\
 	& - E \left[ \sum_{l \in \mathcal{L}} \left( \frac{1}{\epsilon}w_l + d_l \right) \bar{\mu}_{il}(a_i(k),c(k)) \right. \notag \\
 	& \left. + \sum_{l \in \mathcal{L}} q_l \bar{\mu}_{el}(a_i(k),c(k)) \right] \notag \\
 	& + \sum_{l \in \mathcal{L}} \frac{1}{\epsilon}w_l E \left[ I_{il}^*(a_i(k),c(k),d,q) \right] \notag \\
 	= & B_1 - \sum_{l \in \mathcal{L}} d_l [ \mu_{il} - \lambda_l (1-p_l) ] - \sum_{l \in \mathcal{L}} q_l (\mu_{el} - x_l^*) \label{mean_mu_kc} \\
 	& - \frac{1}{\epsilon} \sum_{l \in \mathcal{L}} [ U_l(x_l^*) - U_l(\tilde{x}_l^*(k)) ] \notag \\
 	& - \frac{1}{\epsilon} \sum_{l \in \mathcal{L}} w_l \mu_{il} - w_l E \left[ I_{il}^*(a_i(k),c(k),d,q) \right] \notag
\end{align}
where $B_1=B_6+B_7$, (\ref{fact_consequence}) follows for any $( \bar{\mu}_{il}(a_i(k),c(k)), \bar{\mu}_{el}(a_i(k),c(k)) )_{l \in \mathcal{L}} \in \mathcal{C}(a_i(k),c(k))$ as was explained in Fact \ref{fact_online_opt_kc}, and (\ref{mean_mu_kc}) holds for any $( \mu_{il}, \mu_{el} )_{l \in \mathcal{L}} \in \mathcal{C}$. It should be clear that $( \mu^*_{il}, \mu^*_{el} )_{l \in \mathcal{L}} \in \mathcal{C}$, where $( \mu^*_{il}, \mu^*_{el} )_{l \in \mathcal{L}}$ is the solution to (\ref{offline_opt_kc}). Thus we have the following:
\begin{align*}
	E & \left[ V(d(k+1),q(k+1)) | d(k)=d, q(k)=q \right] - V(d,q) \\
 	\leq & B_1 - B_2 \sum_{l \in \mathcal{L}} d_l - B_3 \sum_{l \in \mathcal{L}} q_l - \frac{1}{\epsilon} \sum_{l \in \mathcal{L}} [ U_l(x_l^*) - U_l(\tilde{x}_l^*(k)) ] \\
 	& - \frac{1}{\epsilon} \sum_{l \in \mathcal{L}} w_l \mu^*_{il} - w_l E \left[ I_{il}^*(a_i(k),c(k),d,q) \right]
\end{align*}
where
\begin{equation*}
	B_2 = \min_{l \in \mathcal{L}} \left\{ \mu^*_{il} - \lambda_l (1-p_l) \right\}
\end{equation*}
and
\begin{equation*}
	B_3 = \min_{l \in \mathcal{L}} \left\{ \mu^*_{el} - x_l^* \right\}.
\end{equation*}
\end{proof}

\section{Proof of Theorem \ref{optimality_online_kc}}

From Lemma \ref{expected_drift_kc} we know that
\begin{align*}
	\frac{1}{\epsilon} & \sum_{l \in \mathcal{L}} U_l(x_l^*) - U_l(\tilde{x}_l^*(k)) \\
	+ & \frac{1}{\epsilon} \sum_{l \in \mathcal{L}} w_l \mu^*_{il} - w_l E \left[ I_{il}^*(a_i(k),c(k),d,q) \right] \\
 	\leq & B_1 - B_2 \sum_{l \in \mathcal{L}} d_l - B_3 \sum_{l \in \mathcal{L}} q_l + V(d,q) \\
 	& - E \left[ V(d(k+1),q(k+1)) | d(k)=d, q(k)=q \right] \\
 	\leq & B_1 + V(d,q) \\
 	& - E \left[ V(d(k+1),q(k+1)) | d(k)=d, q(k)=q \right]
\end{align*}
since $B_2 \sum_{l \in \mathcal{L}} d_l + B_3 \sum_{l \in \mathcal{L}} q_l \geq 0$. Taking expectations:
\begin{align*}
	\frac{1}{\epsilon} & E \left[ \sum_{l \in \mathcal{L}} U_l(x_l^*) - U_l(\tilde{x}_l^*(k)) \right. \\
	& \left. + \sum_{l \in \mathcal{L}} w_l \mu^*_{il} - w_l I_{il}^*(a_i(k),c(k),d(k),q(k)) \right] \\
	\leq & B_1 - E\left[ V(d(k+1),q(k+1)) \right] + E\left[ V(d(k),q(k)) \right].
\end{align*}

Adding the terms for $k=\{ 1, \ldots, K \}$ and dividing by $K$ we get:
\begin{align}
	\frac{1}{\epsilon} & E \left[ \sum_{l \in \mathcal{L}} U_l(x_l^*) + w_l \mu^*_{il} \right. \notag \\
	& \left. - \sum_{l \in \mathcal{L}} \frac{1}{K} \sum_{k=1}^K U_l(\tilde{x}_l^*(k)) + w_l I_{il}^*(a_i(k),c(k),d(k),q(k)) \right] \notag \\
	& \leq B_1 - \frac{E\left[ V(d(K+1),q(K+1)) \right]}{K} + \frac{E\left[ V(d(1),q(1)) \right]}{K} \notag \\
	& \leq B_1 + \frac{E\left[ V(d(1),q(1)) \right]}{K} \label{positive_mean_kc}
\end{align}
where (\ref{positive_mean_kc}) follows from the fact that the Lyapunov function V is non-negative.

Assuming $E\left[ V(d(1),q(1)) \right] < \infty$ we get the following limit expression:
\begin{align*}
	\limsup_{K \rightarrow \infty} & E \left[ \sum_{l \in \mathcal{L}} U_l(x_l^*) + w_l \mu^*_{il} - \sum_{l \in \mathcal{L}} \frac{1}{K} \sum_{k=1}^K U_l(\tilde{x}_l^*(k)) \right. \\
	& \left. - \sum_{l \in \mathcal{L}} \frac{1}{K} \sum_{k=1}^K w_l I_{il}^*(a_i(k),c(k),d(k),q(k)) \right] \leq B \epsilon
\end{align*}
where $B=B_1$.\hfill $\blacksquare$

%
%
%
%
%


%


%
%
\end{document}